\newtheorem{definition}{Definition}
\newtheorem{assumption}{Assumption}
\newtheorem{theorem}{Theorem}
\newtheorem{remark}{Remark}
\newtheorem{lemma}{Lemma}
\newenvironment{breakablealgorithm}
  {
   \begin{center}
     \refstepcounter{algorithm}
     \hrule height.8pt depth0pt \kern2pt
     \renewcommand{\caption}[2][\relax]{
       {\raggedright\textbf{\ALG@name~\thealgorithm} ##2\par}%
       \ifx\relax##1\relax 
         \addcontentsline{loa}{algorithm}{\protect\numberline{\thealgorithm}##2}%
       \else 
         \addcontentsline{loa}{algorithm}{\protect\numberline{\thealgorithm}##1}%
       \fi
       \kern2pt\hrule\kern2pt
     }
  }{
     \kern2pt\hrule\relax
   \end{center}
  }
\title{\LARGE \bf
A Differential Private Method for Distributed Optimization in Directed Networks via State Decomposition 
}
\author{Xiaomeng Chen$^{1}$, Lingying Huang$^{1}$, \thanks{$^{1}$X. Chen, L. Huang and L. Shi are with the Department of Electronic and Computer Engineering, Hong Kong University of Science and Technology, Clear Water Bay, Kowloon, Hong Kong ( email:xchendu@connect.ust.hk, lhuangaq@connect.ust.hk, eesling@ust.hk).} Lidong He$^{2}$, \thanks{$^{2}$L. He is with the school of Automation at Nanjing University of
   Science and Technology,  P. R. China, 210094 (email: lidonghe@njust.edu.cn).} Subhrakanti Dey$^{3}$,\thanks{$^{3}$S. Dey is with the Hamilton Institute, NUI Maynooth, Maynooth, Co. Kildare, Ireland, and also affiliated with Uppsala University. (email: subhrakanti.dey@angstrom.uu.se).} and Ling Shi$^{1}$
}
\begin{document}
\maketitle
\thispagestyle{empty}
\pagestyle{empty}

\begin{abstract}
In this paper, we study the problem of consensus-based distributed optimization,  where a network of agents, abstracted as a directed graph, aims to minimize the sum of all agents' cost functions collaboratively.  In existing distributed optimization approaches (Push-Pull/AB) for directed graphs, all agents exchange their states with neighbors to achieve the optimal solution with a constant stepsize, which may lead to the disclosure of sensitive and private information. For privacy preservation,
 we propose a novel state-decomposition based gradient tracking approach (SD-Push-Pull) for distributed optimzation over directed networks that preserves differential privacy, which is a strong notion that protects agents' privacy against an adversary with arbitrary auxiliary information. The main idea of the proposed approach is to decompose the gradient state of  each agent into two sub-states.  Only one substate is exchanged by the agent with its neighbours over time, and the other one is not shared. That is to say, only one substate is visible to an adversary, protecting the sensitive information from being leaked. It is proved that under certain decomposition principles, a bound for the sub-optimality of the proposed algorithm can be derived and the differential privacy is achieved simultaneously. Moreover, the trade-off between differential privacy and the optimization accuracy is also characterized. Finally, a numerical simulation is provided to illustrate the effectiveness of the proposed approach. 
\end{abstract}
\begin{keywords}
distributed optimization, directed graph, decomposition, differential private.
\end{keywords}
\section{INTRODUCTION}
 
 With the rapid development in networking technologies, distributed optimization over multi-agent networks has been a heated research topic during the last decade, where agents aim to collaboratively minimize the sum of local functions possessed by each agent through  local communication. Compared with centralized ones, distributed algorithms allow more flexibility and scalability  due to its capability of breaking large-scale problems into  sequences of smaller ones. In view of this,  distributed algorithms are  inherently robust to  environment uncertainties and communication failures and are widely adopted in   power grids~\cite{braun2016distributed}, sensor networks \cite{dougherty2016extremum} and vehicular networks \cite{mohebifard2018distributed}.
 
  The most commonly used algorithms for distributed optimization is the Decentralized Gradient Descent (DGD), requiring diminishing step-sizes to ensure optimality \cite{nedic2009distributed}. To overcome this challenge, Xu et al.\cite{xu2015augmented} replaced the local gradient with an estimated global gradient based on the dynamic average consensus \cite{kia2019tutorial} and then proposed a gradient tracking method for distributed optimzation problem. Recently, Pu et al. \cite{pu2020push} and Xin and Khan \cite{xi2018linear}  devised a modified gradient-tracking algorithm called Push-Pull/AB algorithm for consensus-based distributed optimization, which can be applied to a general directed graph including undirected graph as a special case.  
   \\ \indent The above conventional distributed algorithms require each agent to exchange their state information with the neighbouring agent, which is not desirable if the participating agents have sensitive and private information, as the transmitted information is 
	at risk of being intercepted by adversaries.  By hacking into communication links, an adversary may have access to all conveyed  messages, and potentially obtain the private information of each agent by adopting an attack algorithm. The theoretical analysis of privacy disclosure in distributed optimization is presented by Mandal\cite{mandal2016privacy}, where the parameters of cost functions and generation power can be correctly inferred by an eavesdropper in the economic dispatch problem. As the number of privacy leakage events is increasing, there is an urgent need to preserve privacy of each agent in distributed systems.  
	 \\  \indent For the privacy preservation in distributed optimization, there have been several research results. Wang\cite{wang2019privacy} proposed a privacy-preserving average consensus in which the state of an agent is decomposed into two substates.   Zhang et al. \cite{zhang2018enabling} and Lu et al. \cite{lu2018privacy}   
	 combined existing distributed optimization approaches with the partially  homomorphic cryptography. However, these approaches suffer from high computation complexity and communication cost which may be inapplicable for systems with limited resources.  As an appealing alternative, differential privacy has attracted much  attention in light of its rigorous mathematical framework, proven security properties, and easy implementation \cite{nozari2017differentially}. The main idea of differential private approaches is noise perturbation, leading to a tradeoff between privacy and accuracy.
	 Huang et al. \cite{huang2015differentially} devised a differential private distributed optimization algorithm by adding  Laplacian noise on transmitted  message with a decaying stepsize, resulting in a low convergence rate.  A  constant stepsize is achieved by Ding et al. \cite{ding2018consensus,ding2021differentially} where linear convergence is enjoyed by gradient tracking method and differential privacy is achieved by perturbing states.
	 
	  None of the aforementioned approaches, however, is suitable  for directed graphs with weak topological restrictions, which is more practical in real applications.     In practice, the information flows among sensors may not be bidirectional due to the different communication ranges, e.g., the coordinated vechicle control problem \cite{ghabcheloo2005coordinated} and the economic dispatch problem \cite{yang2013consensus}.  To address  privacy leakage  in distributed optimization for agents interacting over an unbalanced graphs, Mao  et al. \cite{mao2020privacy} designed a privacy-preserving algorithm based on the push-gradient method with a decaying stepsize, which is implemented via a case study to the economic dispatch problem. Nevertheless, the algorithm in \cite{mao2020privacy} lacked a formal privacy notion and it cannot achieve differential privacy.\\                                                                               
	\indent All the above motivates us to further develop a differential private distributed optimization algorithm over directed graphs. Inspired by \cite{wang2019privacy},  a novel differential private distributed optimization approach based on state decomposition is proposed for agents communicating over directed networks.  Under the proposed state decomposition mechanism, a Laplacian noise is perturbed on the gradient state and the global gradient is still tracked after state decomposition. 
	The main contributions of this paper are summarized as follows:

\begin{enumerate}
\item We propose a state-decomposition based gradient tracking approach (\textbf{SD-Push-Pull}) for distributed optimziation over unbalanced directed networks, where the gradient state of each agent is decomposed into two substates to maintain the privacy of all agents. Specifically, one sub-state replacing the role of the original state is communicated with neighboring agents while the other substate is not shared. 
 Compared to the privacy-preserving approaches in \cite{huang2015differentially} and \cite{ding2018consensus}, our proposed approach can be applied to more general and practical networks. 

 \item {\color{black}By carefully designing the state decomposition mechanism, we only need to add noise to one substate of directions instead of perturbing both states and directions \cite{ding2021differentially}. Moreover,   our proposed SD-Push-Pull  algorithm does not require the stepsize or noise variance to be diminishing, which ensures a linear convergence 
to a neighborhood of the optimal solution in expectation exponentially fast under a constant stepsize policy (\textbf{Theorem 1}). }

 \item Different from the privacy notion in \cite{wang2019privacy} and \cite{mao2020privacy},  we adopt the definition of differential privacy, which ensures the privacy of agents regardless of any auxiliary information that an adversary may have and enjoys a rigorous formulation. In addition, we prove that the proposed SD-Push-Pull algorithm can achieve $\epsilon$-differential privacy (\textbf{Theorem 2}). 

\end{enumerate}


 \textit{Notations:}  In this paper, $\mathbb{N}$ and $\mathbb{R}$  represent the sets
whose components are natural numbers and real numbers. $\mathbb{R}_{++}$ denotes the set of all positive real numbers. $\mathbb{R}^n$ and $\mathbb{R}^{n\times p}$ represent the set of $n-$dimensional vectors and $n\times p$-dimensional matrices. We let $(\mathbb{R}^{n\times p})^{\mathbb{N}}$ denote the space of vector-valued  sequences in $\mathbb{R}^{n\times p}$. $\mathbf{1}_n\in\mathbb{R}^n$ and $\mathbf{I}_n\in\mathbb{R}^{n\times n}$ represent the vector of ones and the identity matrix, respectively. The spectral radius of matrix $\mathbf{A}$ is denoted by $\rho(\mathbf{A})$. $[\mathbf{x}]_r$ denotes the $r-$th element of the vector $\mathbf{x}$.  For a given constant $\theta>0$, Lap($\theta$) is the Laplace distribution with probability function $f_L(x,\theta) =\frac{1}{2\theta}e^{-\frac{|x|}{\theta}}$. In addition, $\mathbb{E}(x)$ and $P(x)$ denote the expectation and probability distribution of a random variable $x$, respectively. 
 
\section{PRELIMINARIES AND PROBLEM FORMULATION}
\subsection{Network Model}
We consider a group of agents which communicate with each other over a directed graph. The directed graph is denoted as a pair $G\triangleq (\mathcal{N}, \mathcal{E})$, where $\mathcal{N}$ denotes the agents set and $\mathcal{E}\subset \mathcal{N} \times \mathcal{N}$ denotes the edge set, respectively. A communication link from agent $i$ to agent $j$ is denoted by $(j,i)\in \mathcal{ E}$, indicating that agent $i$ can send messages to agent $j$. Given a nonnegative matrix $\mathbf{M}=[m_{ij}]\in\mathbb{R}^{n\times n}$, the directed graph induced by $\mathbf{M}$ is denoted by $\mathcal{G}_{\mathbf{M}}\triangleq (\mathcal{N}, \mathcal{E}_{\mathbf{M}})$, where $\mathcal{N}=\{1,2,\ldots,n\}$ and $(j,i)\in  \mathcal{E}_{\mathbf{M}}$ if and only if $m_{ij}>0$.
The agents who can directly send messages to agent $i$ are represented as in-neighbours of agent $i$ and the set of these agents is denoted as $N_{\mathbf{M},i}^{in}=\{j\in \mathcal{N}\mid (i,j)\in \mathcal{E}_{\mathbf{M}}\}$. Similarly, the agents who can directly receive messages from agent $i$ are represented as out-neighbours of agent $i$ and the set of these agents is denoted as $N_{\mathbf{M},i}^{out}=\{j\in \mathcal{N}\mid (j,i)\in \mathcal{E}_{\mathbf{M}}\}$.

\subsection{Differential Privacy}
Differential privacy serves as a mathematical notion which quantify the degree of the involved individuals' privacy guarantee in a statistical database. We give the following definitions for preliminaries of differential privacy in distributed optimization. 
{\color{black}\begin{definition}\label{df1}
	(Adjacency \cite{ye2021differentially})  Two function sets $\mathcal{S}^{(1)}=\{f_i^{(1)}\}^n_{i=1}$ and  $\mathcal{S}^{(2)}=\{f_i^{(2)}\}^n_{i=1}$, $\mathcal{S}^{(1)}$ and $\mathcal{S}^{(2)}$ are said to be adjacent if there exists some $i_0 \in\{1,2\ldots,n\}$ such that $f_i^{(1)}=f_i^{(2)}, \forall i\neq i_0$ and $f_{i_0}^{(1)}\neq f_{i_0}^{(2)}$.
		
\end{definition}}

Definition \ref{df1} implies that two function sets are adjacent only if one agent changes its objective function. 

\begin{definition}\label{df2}
	(Differential privacy \cite{dwork2006calibrating}) Given $\epsilon>0,$ for any pair of adjacent function sets $\mathcal{S}^{(1)}$ and $\mathcal{S}^{(2)}$ and any observation $\mathcal{O}\subseteq \text{Range}(\mathcal{A})$, a randomized algorithm $\mathcal{A}$ keeps $\epsilon-$differentially private if
	$$P\{\mathcal{A}(\mathcal{S}^{(1)})\in \mathcal{O}\}\leq e^\epsilon P\{\mathcal{A}(\mathcal{S}^{(2)})\in\mathcal{O}\}$$ 
	where $\text{Range}(\mathcal{A})$ denotes the output codomain of $\mathcal{A}$.
	\end{definition}
	
	Definition \ref{df2} illustrates that
	 a random mechanism is differentially private if its outputs are nearly statistically identical over two similar inputs which only differ in one element. Hence, an eavesdropper cannot distinguish between two function sets with high probability  based on the output of the mechanism. 
Here, a smaller $ \epsilon $ represents a higher level of privacy since the eavesdropper has less chance to distinguish sensitive information of each agent from the observations. Nevertheless, a high privacy level will sacrifice the accuracy of the optimization algorithm. Hence, the constant $\epsilon$ determines a tradeoff between the privacy level and the accuracy. 

\subsection{Problem Formulation}
Consider an optimization problem in a multi-agent system of $n$ agents. Each agent has a private cost function $f_i$, which is only known to agent $i$ itself. All the participating agents aim to minimize a global objective function
	\begin{equation}\label{pro1}
	\min\limits_{x\in\mathbb{R}^{p}}\sum\limits_{i=1}^{n}f_{i}(x)
	\end{equation}
	where $x$  is the global decision variable.
	
	To solve Problem \eqref{pro1}, assume each agent $i$ maintains a local copy of $x_i \in \mathbb{R}^{p}$ of the decision variable and an auxiliary variable $y_i \in \mathbb{R}^{p}$ tracking the average gradients. Then we can rewrite Problem \eqref{pro1} into local optimization problem of each agent with an added consensus constraint as follows
	\begin{equation}\label{pro2}
	\begin{split}
	\min_{x_i \in \mathbb{R}^p} \quad & \sum_{i=1}^n f_i(x_i)   \\
	\text{s.t.} \quad & x_i = x_j, \quad \forall i,j, \\
	\end{split}
	\end{equation} 

	where $ x_i $ is the local decision variable of the agent $ i $.
	
	Let
	$$\begin{aligned}
&\mathbf{x}:=[x_1,x_2,\ldots,x_n]^\top
\in\mathbb{R}^{n\times p},\\
&\mathbf{y}:=[y_1,y_2,\ldots,y_n]^\top
\in\mathbb{R}^{n\times p}.
	\end{aligned}
	$$
	
	Denote $F(x)$ as an aggregate objective function of the local variables, i.e.,   $F(x)=\sum_{i=1}^n f_i(x_i)$.
	
	With respect to the objective function in Problem \eqref{pro1}, we assume the following strong convexity and smoothness conditions. 
	\begin{assumption}\label{asp}
Each objective function $f_i$ is $\mu-$strongly convex with $L-$Lipschitz continuous gradients, i.e., for any $\mathbf{x}$, $\mathbf{y}\in\mathbb {R}^p$,
{\color{black}
$$\begin{aligned}
	&\langle\nabla f_i(\mathbf{x})- \nabla f_i(\mathbf{y}), \mathbf{x}-\mathbf{y}\rangle \geq \mu||\mathbf{x}-\mathbf{y}||^2,\\
	&||\nabla f_i(\mathbf{x})- \nabla f_i(\mathbf{y})||\leq L||\mathbf{x}-\mathbf{y}||.\\
\end{aligned}
$$}
\end{assumption}
\vspace*{2mm}

Under Assumption \ref{asp}, Problem \eqref{pro1} has a unique optimal solution $x^{\star}\in\mathbb{R}^{p}$ \cite{pu2018push}.

\section{PRIVATE GRADIENT TRACKING ALGORITHM VIA STATE DECOMPOISTION}
In this section, we propose a state-decomposition based  gradient tracking method (SD-Push-Pull) for distributed optimization  over directed graphs, which is described in Algorithm \ref{alg1}. 
 The main idea  is to let each agent decompose its gradient state $y_{i}$ into two substates $y^\alpha_{i}$ and $y^\beta_{i}$. The substate  $y^\alpha_{i}$ is used in the communication with other agents while $y^\beta_{i}$ is never shared with other agents except for agent $i$ itself, so the substate $y^\beta_{i}$ is imperceptible to the neighbouring agents of agent $i$.

 \begin{breakablealgorithm}
\caption{SD-Push-Pull}
\label{alg1}
\begin{algorithmic}
 \State \textbf {Step 1}.  Initialization:
 \begin{enumerate}
 \item Agent $i\in \mathcal{N}$ chooses in-bound mixing/pulling weights $R_{ij}\geq 0$ for all $j\in N^{in}_{R,i}$, out-bound pushing weights $\tilde C_{li}\geq 0$ for all $l\in N^{out}_{\tilde C,i}$, and the two sub-state weights $\alpha_i, \beta_i  \in (0,1)$.
\item Agent $i\in \mathcal{N}$ picks any {\color{black}$x_{i,0}, y_{i,0}^\alpha,  y_{i,0}^\beta\in \mathbb{R}^p,$  $\theta _i \in \mathbb{R_{++}}$}.
 \item The step size $\eta>0$ is known to each agent.  

 \end{enumerate} 

  \textbf {Step 2.} At iteration $k=0,1,2,\ldots,{\color{black}K}$\\
   {\color{black}
 \begin{enumerate}

 \item  Agent $i\in \mathcal{N}$ pushes $\tilde C_{li}y_{i,k}^\alpha$ to each $l\in N^{out}_{\tilde C,i}$.
  \item Agent $i\in \mathcal{N}$ draws a random vector $\xi_{i,k}$ consisting of $p$  Laplacian noise independently drawn from Lap($\theta_i$) and updates $y_{i,k+1}^\alpha$, $y_{i,k+1}^\beta$ as follows:
\begin{subequations}\label{y}
 	\begin{align}\label{ya}
 		&y_{i,k+1}^\alpha=\sum_{j\in N^{in}_{\tilde C,i}\cup\{i\}} \tilde C_{ij}y_{j,k}^\alpha+(1-\beta_i) y_{i,k}^\beta+\xi_{i,k}\\\label{yb}
 		&y_{i,k+1}^\beta=\alpha_i y_{i,k}^\alpha+\beta_i y_{i,k}^\beta+\nabla f_i(x_{i,k})
 		 	\end{align}
 	 \end{subequations}

  \item Agent $i\in \mathcal{N}$ pulls  $[x_{j,k}-\eta (y_{j,k+1}^\alpha-y_{j,k}^\alpha)]$ from each $j\in  N^{in}_{R,i}$.	
\item Agent $i\in \mathcal{N}$ updates $x_{i,k+1}$ through
\begin{equation}\label{d1}
	x_{i,k+1}=\sum_{j\in N^{in}_{R,i}\cup\{i\}} R_{ij}[x_{j,k}-\eta (y_{j,k+1}^\alpha-y_{j,k}^\alpha)].
\end{equation}
\end{enumerate}}
\end{algorithmic}
\end{breakablealgorithm}
 
 Denote $$\begin{aligned}
 	&\mathbf{R}:=[R_{ij}], \quad \mathbf{\Lambda_\alpha}:=diag(\alpha_1,\alpha_2,\ldots,\alpha_n), \\
 	&\mathbf{\tilde C}:=[\tilde C_{ij}],\quad \mathbf{\Lambda_\beta}:=diag(\beta_1,\beta_2,\ldots,\beta_n), \\
&{\color{black}\mathbf{\xi}_k:=[\xi_{1,k},\xi_{2,k}\ldots, \xi_{n,k}]^\top},\\
&\mathbf{y}_k:=[y_{1,k}^\alpha,\ldots,y_{n,k}^\alpha(k),y_{1,k}^\beta(k),\ldots,y_{n,k}^\beta(k)]^\top\\
&\nabla F(\mathbf{x}_k)=[\nabla f_1(x_{1,k}),\ldots,\nabla f_n(x_{n,k})]^\top,\\
&{\color{black}\nabla {\tilde F}(\mathbf{x}_{k})=[\mathbf{\xi}_k^\top, \nabla F(\mathbf{x}_k)^\top]^\top,}\\
 	&\mathbf{ C}:=\begin{bmatrix}
 		\mathbf{\tilde C}&\bf{I}_n-\mathbf{\Lambda_\beta}\\
 		\mathbf{\Lambda_\alpha}&\mathbf{\Lambda_\beta}\\
 		 	\end{bmatrix},\quad \mathbf{T}:=\begin{bmatrix}
 		\mathbf{I}_n & \mathbf{0}_n
 	\end{bmatrix}.\\
 \end{aligned}
 $$
 
Algorithm \ref{alg1} can be rewritten
 in a matrix form as follows:
 {\color{black}\begin{subequations}\label{eqalg1}
\begin{align}
 	&\mathbf{y}_{k+1}=\mathbf{C}\mathbf{y}_k+\nabla {\tilde F}(\mathbf{x}_{k}),\\
 	&\mathbf{x}_{k+1}=\mathbf{R}[\mathbf{x}_k-\eta \mathbf{T} (\mathbf{y}_{k+1}-\mathbf{y}_k)],
\end{align}	
where $\mathbf{x}_0$ and $\mathbf{y}_0$ are arbitrary.
 \end{subequations}}
 {\color{black}
 \begin{remark}
In SD-Push-Pull, we design a state decomposition mechanism for the direction state (shown in \eqref{y}), where  sensitive information $\nabla f_i(x_{i,k})$ is contained in one substate of direction, $y_{i,k+1}^\beta$. Since the substate $y_{i,k+1}^\beta$ is not shared in  communication link, the private information    $\nabla f_i(x_{i,k})$ of agent $i$ is protected from being leaked. Although $y_{i,k}^\beta$ will be shared through $y_{i,k+1}^\alpha$,  the  noise $\xi_{i,k}$ added in the updated of  $y_{i,k}^\alpha$ (shown in \eqref{ya}) helps to avoid privacy breaches of $f_i$. The above discussion intuitively illustrates  how state decomposition mechanism achieves differential privacy of each agent. Rigorous theoretical  analysis will be provided in Section \ref{privacy}. 

 \end{remark}
}

\begin{assumption}\label{asp1}
	The matrix $\mathbf{R}\in\mathbb{R}^{n\times n}$ is a nonnegative row-stochastic matrix and   $\mathbf{C}\in\mathbb{R}^{2n\times 2n}$ is a nonnegative column-stochastic matrix, i.e., $\mathbf{R1}_n=\mathbf{1}_n$ and $\mathbf{1}_{2n}^\top \mathbf{C}=\mathbf{1}_{2n}^\top$. Moreover, the diagonal entries of $\mathbf{R}$ and  $\mathbf{C}$ are positive, i.e., $R_{ii}>0, \tilde C_{ii}>0, \forall i\in \mathcal{N}$.
\end{assumption}

Assumption \ref{asp1} can be satisfied by properly designing the weights in $\mathbf{R}$ and $\mathbf{C}$ by each agent locally. For instance, each agent may choose $R_{ij}=\frac{1}{| N^{in}_{R,i}|+c_R}$ for some constant $c_R>0$ for all $j\in N^{in}_{R,i}$ and let $R_{ii}=1-\sum_{j\in N^{in}_{R,i}}R_{ij}$. {\color{black}Similarly, agent $i$ may choose $\alpha_i=\zeta$ and  $\tilde C_{li}=\frac{1-\zeta}{| N^{out}_{C,i}|+c_C}$ for some constant $0<\zeta<1, c_C>0$ for all $l\in N^{out}_{C,i}$, and let $\tilde C_{ii}=1-\zeta-\sum_{l\in N^{out}_{C,i}} \tilde C_{li}$. }Such a choice of weights renders  $\mathbf{R}$ row-stochastic and $\mathbf{C}$ 
 column-stochastic, thus satisfying Assumption \ref{asp1}. 
 \begin{assumption}\label{asp2}
	The graphs $\mathcal{G}_\mathbf{R}$ and $\mathcal{G}_\mathbf{C^\top}$ induced by matrices $\mathbf{R}$ and $\mathbf{C}$ contain at least one spanning tree. In addition, there exists at least one agent that is a root of spanning trees for both  $\mathcal{G}_\mathbf{R}$ and $\mathcal{G}_\mathbf{C^\top}$.
\end{assumption}

{\color{black} Assumption \ref{asp2} is weaker than  assumptions in most previous works (e.g.,\cite{xi2018linear}, \cite{nedic2017achieving},  \cite{xin2018linear}), where graphs $\mathcal{G}_\mathbf{R}$ and $\mathcal{G}_\mathbf{C^\top}$ are assumed to be strongly connected. The relaxed assumption about graph topology enables us to design graphs $\mathcal{G}_\mathbf{R}$ and $\mathcal{G}_\mathbf{C^\top}$ more flexibly. Similar assumption are adopted in \cite{pu2018push}, \cite{pu2020robust}.}

\begin{lemma}[\cite{horn2012matrix}]\label{lem1}
	 Under Assumption \ref{asp1}-\ref{asp2}, the matrix $\mathbf{R}$ has a unique nonnegative left eigenvector $u^\top$ (w.r.t. eigenvalue) with $u^\top \mathbf{1}_n=n,$ and  matrix $\mathbf{C}$ has a unique nonnegative right eigenvector $v$ (w.r.t. eigenvalue) with $ \mathbf{1}_{2n}^\top v=n$.
\end{lemma}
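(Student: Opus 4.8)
The plan is to read both claims as instances of the Perron--Frobenius theorem for a \emph{row}-stochastic nonnegative matrix, using the spanning-tree hypothesis of Assumption~\ref{asp2} to sharpen the general theory into a \emph{simple} Perron eigenvalue. By Assumption~\ref{asp1}, $\mathbf{R}$ is row-stochastic, so it carries the trivial right eigenvector $\mathbf{1}_n$, and the real content of the first claim is the existence and uniqueness of the nontrivial \emph{left} eigenvector $u^\top$. Again by Assumption~\ref{asp1}, $\mathbf{C}$ is column-stochastic, hence $\mathbf{C}^\top$ is row-stochastic with trivial left eigenvector $\mathbf{1}_{2n}$, and the second claim is precisely the first applied to $\mathbf{C}^\top$: a left eigenvector of $\mathbf{C}^\top$ is a right eigenvector $v$ of $\mathbf{C}$, and the governing graph is exactly $\mathcal{G}_{\mathbf{C}^\top}$, whose spanning tree is supplied by Assumption~\ref{asp2}. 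I would therefore prove the statement once for a generic row-stochastic matrix whose induced graph contains a spanning tree, and then invoke it twice.

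First I would pin down the spectral radius. Since $\mathbf{R}\geq 0$ has every row summing to one, the maximum-absolute-row-sum bound gives $\rho(\mathbf{R})\leq 1$, while $\mathbf{R}\mathbf{1}_n=\mathbf{1}_n$ exhibits $1$ as an eigenvalue; hence $\rho(\mathbf{R})=1$ and $1$ is the Perron eigenvalue.

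The crux is to prove that $1$ is \emph{algebraically simple}. Strong connectivity is not available---Assumption~\ref{asp2} furnishes only a spanning tree---so the irreducible Perron--Frobenius theorem cannot be applied directly. Instead I would pass to the Frobenius normal form of $\mathbf{R}$: a permutation similarity renders it block upper-triangular with irreducible diagonal blocks, one per strongly connected component of $\mathcal{G}_\mathbf{R}$. A diagonal block attains spectral radius $1$ exactly when its component is \emph{closed}, i.e.\ a sink of the condensation, and the spanning-tree hypothesis forces this closed component to be unique; the irreducible theorem applied to that single block yields eigenvalue $1$ there with multiplicity one, while every other diagonal block has spectral radius strictly below $1$ and contributes no eigenvalue $1$. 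The block-triangular structure (together with semisimplicity of $1$, which follows from boundedness of the stochastic powers $\mathbf{R}^k$) then shows the algebraic multiplicity of $1$ for the whole matrix is exactly one. Matching this weak graph condition to algebraic simplicity---the nonnegative-matrix fact underlying the cited \cite{horn2012matrix}---is where I expect essentially all the difficulty to lie; the remaining steps are routine.

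Finally, simplicity makes the solution of $u^\top\mathbf{R}=u^\top$ unique up to a scalar, and Perron--Frobenius lets me choose the eigenvector associated with $\rho(\mathbf{R})$ entrywise nonnegative, so $u\geq 0$ and $u\neq 0$; hence $u^\top\mathbf{1}_n=\sum_{i}u_i>0$ and I may rescale to enforce $u^\top\mathbf{1}_n=n$, which fixes $u$ uniquely. Running the identical argument on the row-stochastic matrix $\mathbf{C}^\top$ with graph $\mathcal{G}_{\mathbf{C}^\top}$ produces a unique nonnegative $v$ satisfying $\mathbf{C}v=v$ and normalized by $\mathbf{1}_{2n}^\top v=n$, which establishes both assertions.
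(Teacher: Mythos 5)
The paper itself never proves this lemma: it is imported, citation and all, from Horn--Johnson, being essentially Lemma 1 of the Push--Pull paper \cite{pu2020push}, so there is no internal argument to compare yours against step by step. Your proposal supplies exactly the argument that the citation stands in for, and its skeleton is the right one: reduce both claims to the left-eigenvector problem for a row-stochastic matrix (handling $\mathbf{C}$ through $\mathbf{C}^\top$), get $\rho(\mathbf{R})=1$ from the row sums, pass to the Frobenius normal form, identify the diagonal blocks of spectral radius one with the \emph{closed} classes, argue the graph hypothesis leaves exactly one such class, and finish with the irreducible Perron--Frobenius theorem on that block. Two minor remarks: the semisimplicity-via-bounded-powers step is superfluous, since the block-triangular factorization of the characteristic polynomial already yields algebraic multiplicity one; and only the existence of a spanning tree in each of $\mathcal{G}_\mathbf{R}$ and $\mathcal{G}_{\mathbf{C}^\top}$ is needed here --- the common-root requirement in Assumption \ref{asp2} plays its role elsewhere in the convergence analysis, not in this lemma.

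The one genuine soft spot is the sentence you yourself flag as the crux and then assert without proof: ``the spanning-tree hypothesis forces this closed component to be unique.'' This is true, but only if the two orientations in play are matched correctly, and you never fix a convention. Concretely, for row-stochastic $\mathbf{R}$ the blocks of spectral radius one are the classes closed under the transitions $i \to j$ whenever $R_{ij}>0$, i.e.\ the sinks of the condensation of that transition graph $H$; the paper's induced graph $\mathcal{G}_\mathbf{R}$ is (up to the paper's own notational wobble) the \emph{reverse} of $H$, so a spanning tree of $\mathcal{G}_\mathbf{R}$ rooted at $r$ says that every node can \emph{reach} $r$ in $H$. That is what makes uniqueness a two-line argument: a closed class cannot be exited, yet from inside it one must be able to reach $r$, so every closed class contains $r$, and distinct classes are disjoint, hence there is exactly one. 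Had you instead read the spanning tree in the same orientation as $H$ (root reaches everybody), the claim would be false:
\[
\mathbf{R} \;=\; \begin{pmatrix} 0 & 1/2 & 1/2 \\ 0 & 1 & 0 \\ 0 & 0 & 1 \end{pmatrix}
\]
has a spanning tree rooted at node $1$ in $H$, yet two closed classes $\{2\}$ and $\{3\}$, and its left eigenspace for the eigenvalue $1$ is two-dimensional, so no unique normalized nonnegative left eigenvector exists. Pin the convention down, insert the disjointness argument, and your proof is complete and correct.
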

 
 {\color{black}To make connections with the traditional push-pull algorithm, we let $ \mathbf{\tilde y}_k:= \mathbf{ y}_{k+1}-\mathbf{ y}_{k}$. Then one can  have
 \begin{subequations}\label{eqalg}
\begin{align}\label{xstate}
 	&\mathbf{x}_{k+1}=\mathbf{R}(\mathbf{x}_k-\eta \mathbf{T} \mathbf{\tilde y}_k),\\
 	&\mathbf{\tilde y}_{k+1}=\mathbf{C}\mathbf{\tilde y}_k+\nabla {\tilde F}(\mathbf{x}_{k+1})-\nabla {\tilde F}(\mathbf{x}_{k}).
\end{align}	\vspace*{-2mm}
 \end{subequations}
 \vspace*{-2mm}
 
 }
Since $C$ is column stochastic, \vspace*{-2mm}
$$\mathbf{1}_{2n}^\top \mathbf{\tilde y}_{k+1}=\mathbf{1}_{2n}^\top \mathbf{\tilde y}_{k}+\mathbf{1}_{2n}^\top \nabla {\tilde F}(\mathbf{x}_{k+1})-\mathbf{1}_{2n}^\top \nabla {\tilde F}(\mathbf{x}_{k}). $$ 

{\color{black}From  $\mathbf{1}_{2n}^\top\mathbf{\tilde y}_0= \mathbf{1}_{2n}^\top\mathbf{ y}_{1}-\mathbf{1}_{2n}^\top\mathbf{ y}_{0}=\mathbf{1}_{2n}^\top\nabla \tilde F(\mathbf{x}_{0})$, we have by induction that \vspace*{-2mm}
\begin{equation}\label{sumy}
	\frac{1}{n}\mathbf{1}_{2n}^\top \mathbf{\tilde y}_{k}={\color{black}\frac{1}{n}\mathbf{1}_{2n}^\top} \nabla \tilde F(\mathbf{x}_{k})=\frac{1}{n}\mathbf{1}_{n}^\top \nabla  F(\mathbf{x}_{k})+\frac{1}{n}\mathbf{1}_{n}^\top \mathbf{\xi}_k.
\end{equation}
}
\vspace*{-2mm}

 \section{Convergence Analysis}
 In this section, we analyze the convergence performance of the proposed private push-pull algorithm. For the sake of analysis, we define the following variables:
$$\bar x_k:=\frac{1}{n}u^\top \mathbf{x}_k,\qquad \bar y_k=\frac{1}{n}\mathbf{1}_{2n}^\top \mathbf{\tilde y}_k.$$

The main idea of our strategy is to bound $\mathbb{E}[||\bar x_{k+1}-x^{\star\top}||]_2, \mathbb{E}[||\mathbf{x}_{k+1}-\mathbf{1}_n\bar x_{k+1}||_R], \mathbb{E}[||\mathbf{\tilde y}_{k+1}-v\bar y_{k+1}]||_C]$ on the basis of the linear combinations of their previous values, where $||\cdot||_R$ and $||\cdot||_C$ are specific norms to be defined later. By establishing a linear system of inequalities, we can derive the convergence result. 
\begin{definition}
	Given an arbitrary vector norm $||\cdot||$, for any $\mathbf{x}\in\mathbb{R}^{n\times p}$, we {\color{black}define a matrix norm}
	$$||\mathbf{x}||:=\Big|\Big|\Big[||\mathbf{x}^{(1)}||,||\mathbf{x}^{(2)}||,\ldots, ||\mathbf{x}^{(p)}||\Big]\Big|\Big|_2
	$$
	where $\mathbf{x}^{(1)},\mathbf{x}^{(2)},\ldots,\mathbf{x}^{(p)}\in\mathbb{R}^n$ are columns of $\mathbf{x}$.
\end{definition}
\subsection{Preliminary Analysis} 

From  Eqs. \eqref{xstate} and Lemma \ref{lem1}, we can obtain
\begin{equation}\label{p1}
	\bar x_{k+1}=\frac{1}{n}u^\top \mathbf{R}(\mathbf{x}_k-\eta \mathbf{T} \mathbf{\tilde y}_k)=\bar x_{k}-\frac{\eta}{n}u^\top\mathbf{T} \mathbf{\tilde y }_k.
\end{equation}

Furthermore, let us define {\color{black}$$g_k:=\frac{1}{n}\sum_{i=1}^n\nabla f_i(\bar x_{k}),\quad h_k:=\frac{1}{n}\sum_{i=1}^n \nabla f_i( x_{i,k}),$$}

which leads to
$
	\bar y_k=h_k+\frac{1}{n}\mathbf{1}_{n}^\top \mathbf{\xi}_k.
$
Then, from equation \eqref{p1} 
\begin{equation}\label{m1}
\begin{aligned}
	\bar x_{k+1}=&\bar x_{k}-\frac{\eta}{n}u^\top\mathbf{T} (\mathbf{\tilde y}_k-v\bar y_k+v\bar y_k)\\
	=&\bar x_{k}-\frac{\eta}{n}u^\top\mathbf{T}v\bar y_k-\frac{\eta}{n}u^\top\mathbf{T}(\mathbf{\tilde y}_k- v\bar y_k)\\
	=&{\color{black}\bar x_{k}-\eta'g_k-\eta'(h_k-g_k)-\frac{\eta}{n}u^\top\mathbf{T}(\mathbf{\tilde y}_k- v\bar y_k)}\\
	&{\color{black}-\frac{\eta'}{n}\mathbf{1}_{n}^\top \mathbf{\xi}_k,}\\
\end{aligned}	
\end{equation}
where $\eta':=\frac{\eta}{n}u^\top\mathbf{T}v.$

Based on Lemma \ref{lem1} and equation \eqref{p1}, we obtain
\begin{equation}\label{m2}\vspace*{-1mm}
\begin{aligned}
		&\mathbf{x}_{k+1}-\mathbf{1}_{n}\bar{x}_{k+1}\\
		&=\mathbf{R}(\mathbf{x}_k-\eta \mathbf{T} \mathbf{\tilde y}_k)-\mathbf{1}_{n}\bar x_{k}+\frac{\eta}{n}\mathbf{1}_{n}u^\top\mathbf{T} \mathbf{\tilde y}_k\\
		&=\mathbf{R}(\mathbf{x}_k-\mathbf{1}_{n}\bar x_{k})-(\mathbf{R}-\frac{\mathbf{1}_{n}u^\top}{n})\eta\mathbf{T} \mathbf{\tilde y}_k\\
		&=(\mathbf{R}-\frac{\mathbf{1}_{n}u^\top}{n})(\mathbf{x}_k-\mathbf{1}_{n}\bar x_{k})-(\mathbf{R}-\frac{\mathbf{1}_{n}u^\top}{n})\eta\mathbf{T} \mathbf{\tilde y}_k.
\end{aligned}\end{equation}

Similarly, we have
{\color{black}
\begin{equation}\label{m3}
	\begin{aligned}
		&\mathbf{\tilde y}_{k+1}-v\bar{y}_{k+1}\\
	&=\mathbf{C}\mathbf{\tilde y}_k+\nabla \tilde F(\mathbf{x}_{k+1})-\nabla \tilde F(\mathbf{x}_{k})-v(\bar{y}_{k+1}-\bar{y}_{k})-v\bar{y}_{k+1}\\
		&=(\mathbf{C}-\frac{v\mathbf{1}_{2n}^\top}{n})(\mathbf{\tilde y}_k-v\bar{y}_{k})\\
		&\qquad +(\mathbf{I}_{2n}-\frac{v\mathbf{1}_{2n}^\top}{n})(\nabla \tilde F(\mathbf{x}_{k+1})-\nabla \tilde F(\mathbf{x}_{k})).\\
	\end{aligned}
\end{equation}
}
Denote  $\mathcal{F}_k$ as the $\sigma$-algebra generated by
${\color{black}\{\mathbf{\xi}_0,\ldots, \mathbf{\xi}_{k-1}\}}$, and define $\mathbb{E}[\cdot | \mathcal{F}_k]$ as the conditional expectation given $\mathcal{F}_k$. 
\subsection{Supporting lemmas}
We next prepare a few useful supporting lemmas for further convergence analysis. 
\begin{lemma}\label{lemma2}
	Under Assumption \ref{asp}, there holds
	{\color{black}
	$$\begin{aligned}
		&||h_k-g_k||_2\leq \frac{L}{\sqrt{n}}||\mathbf{x}_k-\mathbf{1}_n \bar x_k||_2,\\
		&||g_k||_2\leq L||\bar x_k-x^{\star\top}||_2.\\
		&\mathbb{E}[||\bar y_k-h_k||_2^2\mid \mathcal{F}_k]\leq \frac{2p\bar \theta^2}{n},\\
	\end{aligned}$$
	where $\bar \theta =\max\limits_i{\theta_i}$.}
\end{lemma}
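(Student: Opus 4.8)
The plan is to prove the three inequalities separately, each being a direct consequence of Assumption~\ref{asp} together with elementary facts (the triangle inequality, Cauchy--Schwarz, the first-order optimality condition, and the variance of the Laplace law). Throughout I use that, for the matrix norm in the preceding definition with base norm $\|\cdot\|_2$, the quantity $\|\mathbf{x}_k-\mathbf{1}_n\bar x_k\|_2$ equals the Frobenius norm $\big(\sum_{i=1}^n\|x_{i,k}-\bar x_k\|_2^2\big)^{1/2}$, so that row-wise estimates can be assembled coordinatewise without loss.

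For the first bound I would write $h_k-g_k=\frac1n\sum_{i=1}^n\big(\nabla f_i(x_{i,k})-\nabla f_i(\bar x_k)\big)$, apply the triangle inequality, and then invoke the $L$-Lipschitz gradient property of Assumption~\ref{asp} on each summand to obtain $\|h_k-g_k\|_2\le \frac{L}{n}\sum_{i}\|x_{i,k}-\bar x_k\|_2$. A single application of Cauchy--Schwarz, $\sum_i\|x_{i,k}-\bar x_k\|_2\le\sqrt n\big(\sum_i\|x_{i,k}-\bar x_k\|_2^2\big)^{1/2}$, converts the right-hand side into $\frac{L}{\sqrt n}\|\mathbf{x}_k-\mathbf{1}_n\bar x_k\|_2$, which is exactly the claim.

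For the second bound the key observation is the first-order optimality condition $\sum_{i=1}^n\nabla f_i(x^\star)=0$, valid because $x^\star$ minimizes $\sum_i f_i$. Subtracting this from the definition of $g_k$ gives $g_k=\frac1n\sum_i\big(\nabla f_i(\bar x_k)-\nabla f_i(x^\star)\big)$, and the same triangle-inequality/Lipschitz argument then yields $\|g_k\|_2\le L\|\bar x_k-x^{\star\top}\|_2$.

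For the third bound I would start from the identity $\bar y_k-h_k=\frac1n\mathbf{1}_n^\top\xi_k=\frac1n\sum_{i=1}^n\xi_{i,k}$ established before the lemma, so that $\|\bar y_k-h_k\|_2^2=\frac{1}{n^2}\sum_{r=1}^p\big(\sum_{i=1}^n[\xi_{i,k}]_r\big)^2$. Since $\xi_k$ is drawn at iteration $k$ and is therefore independent of $\mathcal{F}_k$, the conditional expectation reduces to the unconditional one; using that the entries $[\xi_{i,k}]_r$ are independent across $i$ and $r$ with zero mean and variance $2\theta_i^2$ (the variance of $\mathrm{Lap}(\theta_i)$), the cross terms vanish and each coordinate contributes $\sum_i 2\theta_i^2$. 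Bounding $\theta_i\le\bar\theta$ and summing over the $p$ coordinates gives $\mathbb{E}[\|\bar y_k-h_k\|_2^2\mid\mathcal{F}_k]\le\frac{2p\bar\theta^2}{n}$. None of the three steps is genuinely hard; the only places requiring care are recalling the optimality condition $\sum_i\nabla f_i(x^\star)=0$ in the second bound and handling the independence bookkeeping in the third, so that conditioning on $\mathcal{F}_k$ is harmless and the variance of the Laplace distribution enters with the correct factor of two.
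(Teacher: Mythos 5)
Your proposal is correct and follows essentially the same route as the paper's proof: the same triangle-inequality-plus-Lipschitz argument with Cauchy--Schwarz for the first bound, the same use of the optimality condition $\sum_i\nabla f_i(x^\star)=0$ (which the paper invokes implicitly by writing $g_k$ as a difference against $\nabla F(\mathbf{1}_n x^\star)$) for the second, and the same reduction of $\bar y_k-h_k$ to $\frac{1}{n}\mathbf{1}_n^\top\xi_k$ with the Laplace variance $2\theta_i^2$ for the third. If anything, your write-up makes explicit the independence and zero-mean bookkeeping that the paper compresses into a single line.
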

\vspace*{2mm}
\begin{proof}
	In view of Assumption \ref{asp}, 
	{\color{black}$$\begin{aligned}
			&||h_k-g_k||_2=\frac{1}{n}||\mathbf{1}_n^\top\nabla F(\mathbf{x}_k)-\mathbf{1}_n^\top\nabla F(\mathbf{1}_n{\bar x}_k)||_2\\
			&\leq \frac{L}{n}\sum_{i=1}^n||x_{i,k}-\bar x_k||_2\leq \frac{L}{\sqrt{n}}||\mathbf{x}_k-\mathbf{1}_n{\bar x}_k||_2,
				\end{aligned}$$}
				
				$$\begin{aligned}
					||g_k||_2=&\frac{1}{n}||\mathbf{1}_n^\top\nabla F(\mathbf{1}_n{\bar x}_k)-\mathbf{1}_n^\top\nabla F(\mathbf{1}_n{\bar x}^\star)||	\\
					&\leq \frac{L}{n}\sum_{i=1}^n||\bar x_k-x^{\star\top}||_2=L		||\bar x_k-x^{\star\top}||_2, 	
\end{aligned}$$	
				
{\color{black}$$\mathbb{E}[||\bar y_k-h_k||_2^2\mid \mathcal{F}_k]=\frac{1}{n^2}\mathbb{E}[||\mathbf{1}_n^\top\mathbf{\xi_k}||_2^2\mid \mathcal{F}_k]\leq \frac{2p\bar \theta^2}{n},$$
where $\bar \theta =\max_i{\theta_i}$.}
\vspace*{-0.5mm}
\end{proof}

\begin{lemma}\label{lemma3}
	(Adapted from Lemma 10 in \cite{qu2017harnessing}) Under Assumption \ref{asp}, for any $x\in \mathbb{R}^p$ and $0<\theta<2/\mu,$ we have\vspace*{-1mm}
	$$||x-\theta F(x)-x^{\star\top}||_2\leq 
\tau ||x-x^{\star\top}||,\vspace*{-1mm}$$
where $\tau=\max (|1-\mu \theta|,|1-L \theta|)$.
\end{lemma}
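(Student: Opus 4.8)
The plan is to prove the stated estimate by collapsing one gradient step into multiplication by a matrix whose spectrum is confined to $[\mu,L]$, and then reading off the spectral norm of that matrix. Throughout I read the operator in the statement as the gradient map $\nabla f$ of a function $f$ obeying Assumption \ref{asp} (in the application $f=\tfrac1n\sum_i f_i$, which inherits $\mu$-strong convexity and $L$-smoothness). Since $x^{\star}$ is the unique minimizer, first-order optimality gives $\nabla f(x^{\star})=0$, so the first step is simply to insert this and rewrite
$$x-\theta\nabla f(x)-x^{\star}=(x-x^{\star})-\theta\big(\nabla f(x)-\nabla f(x^{\star})\big).$$

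Next I would represent the gradient difference through the integrated Hessian,
$$\nabla f(x)-\nabla f(x^{\star})=H\,(x-x^{\star}),\qquad H:=\int_{0}^{1}\nabla^{2}f\big(x^{\star}+t(x-x^{\star})\big)\,dt,$$
so that the whole map factors as $x-\theta\nabla f(x)-x^{\star}=(\mathbf{I}-\theta H)(x-x^{\star})$. The matrix $H$ is symmetric, and Assumption \ref{asp} forces every eigenvalue of each Hessian along the segment, hence of $H$, into $[\mu,L]$: strong convexity supplies the lower bound $\mu$ and Lipschitz continuity of the gradient the upper bound $L$. Submultiplicativity of $\|\cdot\|_2$ then yields $\|x-\theta\nabla f(x)-x^{\star}\|_2\le\|\mathbf{I}-\theta H\|_2\,\|x-x^{\star}\|_2$.

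The core of the argument is to evaluate $\|\mathbf{I}-\theta H\|_2$. As $H$ is symmetric with eigenvalues $\lambda\in[\mu,L]$, this spectral norm equals $\max_{\lambda\in[\mu,L]}|1-\theta\lambda|$. Because $\lambda\mapsto|1-\theta\lambda|$ is convex, its maximum over the interval is attained at an endpoint, giving exactly $\tau=\max(|1-\mu\theta|,|1-L\theta|)$; the hypothesis $0<\theta<2/\mu$ is what pins the strong-convexity endpoint factor $|1-\mu\theta|$ below one. I expect this endpoint/convexity reduction to be the only genuinely delicate point, and even it is elementary once the factorization is in place; everything before and after is bookkeeping.

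Finally, to respect the fact that Assumption \ref{asp} posits only a Lipschitz gradient rather than twice differentiability, I would provide a fallback that avoids Hessians: expand $\|(x-x^{\star})-\theta(\nabla f(x)-\nabla f(x^{\star}))\|_2^2$ into $\|x-x^{\star}\|_2^2-2\theta\langle\nabla f(x)-\nabla f(x^{\star}),x-x^{\star}\rangle+\theta^2\|\nabla f(x)-\nabla f(x^{\star})\|_2^2$ and control the cross and gradient terms directly via the strong-monotonicity bound $\langle\nabla f(x)-\nabla f(x^{\star}),x-x^{\star}\rangle\ge\mu\|x-x^{\star}\|_2^2$ together with $L$-Lipschitzness, recovering $\tau^2$ on the admissible step-size range. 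Since the statement is explicitly adapted from Lemma 10 of \cite{qu2017harnessing}, one may alternatively invoke that result directly once the identification of $f$ with $\tfrac1n\sum_i f_i$ is noted.
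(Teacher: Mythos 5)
The paper itself never proves this lemma; it is imported wholesale from Lemma 10 of \cite{qu2017harnessing}, so your proposal has to be judged against the standard proof of that result. Your primary route (factor the gradient step as $(\mathbf{I}-\theta H)(x-x^\star)$ with $H=\int_0^1\nabla^2 f(x^\star+t(x-x^\star))\,dt$, locate the spectrum of $H$ in $[\mu,L]$, and read off $\|\mathbf{I}-\theta H\|_2=\max(|1-\mu\theta|,|1-L\theta|)$ by convexity of $\lambda\mapsto|1-\theta\lambda|$) is correct, but only under twice differentiability, which Assumption \ref{asp} does not grant --- a limitation you yourself flag. Your closing remark, that one may simply invoke the cited lemma, is in fact precisely what the paper does.

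The genuine gap is in the fallback you designed to cover the merely-Lipschitz case. Expanding the square and bounding the cross term by strong monotonicity and the gradient term by $L$-Lipschitzness gives
$$\|x-x^\star-\theta(\nabla f(x)-\nabla f(x^\star))\|_2^2\le\big(1-2\mu\theta+L^2\theta^2\big)\,\|x-x^\star\|_2^2,$$
and the factor $1-2\mu\theta+L^2\theta^2$ does \emph{not} reduce to $\tau^2$: it exceeds $(1-\mu\theta)^2$ by $(L^2-\mu^2)\theta^2$ and exceeds $(1-L\theta)^2$ by $2(L-\mu)\theta$, so it is strictly larger than $\tau^2$ whenever $L>\mu$ (and it is below $1$ only if $\theta<2\mu/L^2$, a far more restrictive range than $\theta<2/\mu$). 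So the step ``recovering $\tau^2$'' fails as stated. The standard repair is the refined coercivity inequality for $\mu$-strongly convex, $L$-smooth functions, obtained by applying co-coercivity to the convex function $f-\tfrac{\mu}{2}\|\cdot\|_2^2$:
$$\langle\nabla f(x)-\nabla f(y),x-y\rangle\ \ge\ \frac{\mu L}{\mu+L}\,\|x-y\|_2^2+\frac{1}{\mu+L}\,\|\nabla f(x)-\nabla f(y)\|_2^2.$$
Inserting this into the expansion leaves the term $\theta\big(\theta-\tfrac{2}{\mu+L}\big)\|\nabla f(x)-\nabla f(x^\star)\|_2^2$: when $\theta\le 2/(\mu+L)$ its coefficient is nonpositive, so bounding $\|\nabla f(x)-\nabla f(x^\star)\|_2\ge\mu\|x-x^\star\|_2$ (strong monotonicity plus Cauchy--Schwarz) yields exactly $(1-\mu\theta)^2$; when $\theta\ge 2/(\mu+L)$ the coefficient is nonnegative, so bounding $\|\nabla f(x)-\nabla f(x^\star)\|_2\le L\|x-x^\star\|_2$ yields $(1-L\theta)^2$. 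The two cases together give $\tau=\max(|1-\mu\theta|,|1-L\theta|)$ under Assumption \ref{asp} alone, with no Hessian needed.
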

\begin{lemma}\label{lemma4}
	(Adapted from Lemma 3 and Lemma 4 in \cite{pu2020push}) Suppose Assumption \ref{asp1} and \ref{asp2} hold. There exist vector norms $||\cdot||_R$ and $||\cdot||_{C}$, such that $\sigma_R:=||\mathbf{R}-\frac{\mathbf{1}_n u^\top}{n}||_R<1, \sigma_C:=||\mathbf{C}-\frac{v  \mathbf{1}_{2n}^\top }{n}||_C<1, $ and $\tau_R$ and $\tau_C$ are arbitrarily close to the spectral radii $\rho(\mathbf{R}-\mathbf{1}_n u^\top/n)<1$ and $\rho(\mathbf{C}-v  \mathbf{1}_{2n}^\top /n)<1,$. 
\end{lemma}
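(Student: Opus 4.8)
The plan is to reduce the statement to two standard facts. The first is purely spectral: the matrices $\mathbf{R}-\mathbf{1}_n u^\top/n$ and $\mathbf{C}-v\mathbf{1}_{2n}^\top/n$ each have spectral radius strictly less than one. The second is the classical result from matrix analysis (Horn and Johnson) that for any square matrix $\mathbf{A}$ and any $\delta>0$ there is a vector norm whose induced operator norm obeys $||\mathbf{A}||\le\rho(\mathbf{A})+\delta$. Granting the first fact, I would apply the second to $\mathbf{A}=\mathbf{R}-\mathbf{1}_n u^\top/n$ with $\delta$ chosen small enough that $\rho(\mathbf{A})+\delta<1$, take the resulting norm as $||\cdot||_R$, and argue identically for $||\cdot||_C$. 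This yields $\sigma_R<1$ and $\sigma_C<1$ with both values squeezable arbitrarily close to the corresponding spectral radii.

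First I would treat $\mathbf{R}$. By Assumption \ref{asp1} the matrix $\mathbf{R}$ is row-stochastic with positive diagonal, and by Assumption \ref{asp2} the graph $\mathcal{G}_\mathbf{R}$ contains a spanning tree. These are exactly the hypotheses under which the Perron eigenvalue $1=\rho(\mathbf{R})$ is simple and is the only eigenvalue on the unit circle: the spanning tree yields a single closed communicating class (hence algebraic multiplicity one), while the positive diagonal guarantees aperiodicity and thus excludes any other peripheral eigenvalue. Using the eigenvectors supplied by Lemma \ref{lem1}, and the normalization $u^\top\mathbf{1}_n=n$, the rank-one matrix $P_R:=\mathbf{1}_n u^\top/n$ satisfies $P_R^2=P_R$ together with $\mathbf{R}P_R=P_R\mathbf{R}=P_R$, so $P_R$ is precisely the spectral projection onto the eigenvalue-one eigenspace. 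Consequently the spectrum of $\mathbf{R}-P_R$ equals that of $\mathbf{R}$ with the eigenvalue $1$ relocated to $0$, and therefore $\rho(\mathbf{R}-P_R)<1$.

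The matrix $\mathbf{C}$ is handled by applying the same argument to $\mathbf{C}^\top$. By Assumption \ref{asp1} the transpose $\mathbf{C}^\top$ is row-stochastic with positive diagonal, and by Assumption \ref{asp2} the graph $\mathcal{G}_{\mathbf{C}^\top}$ contains a spanning tree, so again $1$ is a simple and isolated peripheral eigenvalue. With $v$ and the normalization $\mathbf{1}_{2n}^\top v=n$ from Lemma \ref{lem1}, the matrix $P_C:=v\mathbf{1}_{2n}^\top/n$ is the associated spectral projection, and $\rho(\mathbf{C}-P_C)<1$ follows exactly as before. Feeding $\mathbf{R}-P_R$ and $\mathbf{C}-P_C$ into the norm-existence lemma then produces the two vector norms and finishes the argument.

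The step I expect to be the main obstacle is the simplicity of the eigenvalue $1$ under the \emph{weaker} spanning-tree hypothesis of Assumption \ref{asp2} rather than strong connectivity. Were the graphs strongly connected, the Perron--Frobenius theorem would deliver simplicity immediately; here I must instead invoke the reducible version, in which the multiplicity of the peripheral eigenvalue of a nonnegative matrix equals the number of closed (``final'') communicating classes, and verify that the existence of a spanning tree forces this count to be one. Confirming that a rooted spanning tree indeed corresponds to a unique closed class---and that the positive diagonals rule out all remaining unit-modulus eigenvalues---is the delicate point; once it is in place, the projection identities and the norm-existence lemma are routine.
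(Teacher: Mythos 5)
Your proposal is correct, and it follows essentially the same route as the source this lemma is adapted from: the paper itself offers no proof beyond citing Lemmas 3--4 of \cite{pu2020push}, whose argument is exactly yours --- establish $\rho(\mathbf{R}-\mathbf{1}_n u^\top/n)<1$ and $\rho(\mathbf{C}-v\mathbf{1}_{2n}^\top/n)<1$ via reducible Perron--Frobenius theory (the spanning tree forces a unique closed class, the positive diagonals force aperiodicity, and $\mathbf{1}_n u^\top/n$, $v\mathbf{1}_{2n}^\top/n$ are the rank-one spectral projections), then invoke the Horn--Johnson norm-existence lemma to get vector norms with induced norms arbitrarily close to these spectral radii. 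Your handling of the delicate point (simplicity of the eigenvalue $1$ under the spanning-tree rather than strong-connectivity hypothesis, via the ``every closed class must contain the root'' argument) is sound, and you correctly observe that the common-root part of Assumption \ref{asp2} is not needed for this particular lemma.
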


The following two lemmas are also taken from \cite{pu2018push}. 
\begin{lemma}\label{lemma5}
	Given an arbitrary norm $||\cdot||$, for $\mathbf{W}\in\mathbb{R}^{m\times n}$ and $\mathbf{x}\in\mathbb{R}^{n\times p}$, we have $||\mathbf{Wx}||\leq ||\mathbf{W}||||\mathbf{x}||$. For any $w \in\mathbb{R}^{n\times 1}$ and $x\in\mathbb{R}^{1\times p}$, we have $||wx||=||w||||x||_2$. 
\end{lemma}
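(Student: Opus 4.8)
The plan is to reduce both claims to columnwise statements, exploiting the structure of the matrix norm from the Definition: $\|\mathbf{x}\|$ is the $\ell_2$ norm of the $p$-vector whose entries are the given vector norms $\|\mathbf{x}^{(1)}\|,\dots,\|\mathbf{x}^{(p)}\|$ of the columns of $\mathbf{x}$. Throughout I read $\|\mathbf{W}\|$ as the operator norm induced by the arbitrary vector norm $\|\cdot\|$, so that $\|\mathbf{W}\mathbf{z}\|\le\|\mathbf{W}\|\,\|\mathbf{z}\|$ holds for every $\mathbf{z}\in\mathbb{R}^{n}$; this is the reading consistent with the induced norms $\|\cdot\|_R,\|\cdot\|_C$ introduced in Lemma \ref{lemma4}.

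For the first inequality, I would first observe that the $j$-th column of $\mathbf{Wx}$ is exactly $\mathbf{W}\mathbf{x}^{(j)}$, so by the definition of the matrix norm, $\|\mathbf{Wx}\|$ equals the $\ell_2$ norm of the vector $\big(\|\mathbf{W}\mathbf{x}^{(1)}\|,\dots,\|\mathbf{W}\mathbf{x}^{(p)}\|\big)$. Applying the operator-norm bound entrywise gives $\|\mathbf{W}\mathbf{x}^{(j)}\|\le\|\mathbf{W}\|\,\|\mathbf{x}^{(j)}\|$ for each $j$, so this vector is dominated componentwise by $\|\mathbf{W}\|\cdot\big(\|\mathbf{x}^{(1)}\|,\dots,\|\mathbf{x}^{(p)}\|\big)$. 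Because all entries are nonnegative and the $\ell_2$ norm is monotone on the nonnegative orthant, taking $\ell_2$ norms preserves the inequality, and positive homogeneity of $\|\cdot\|_2$ then extracts the scalar $\|\mathbf{W}\|$, yielding $\|\mathbf{Wx}\|\le\|\mathbf{W}\|\,\|\mathbf{x}\|$.

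For the equality, I would use that $wx$ is the rank-one matrix with $(i,j)$ entry $w_i x_j$, so its $j$-th column is the scaled vector $x_j\,w$. Homogeneity of the vector norm gives $\|x_j w\|=|x_j|\,\|w\|$, hence $\|wx\|$ is the $\ell_2$ norm of $\big(|x_1|\,\|w\|,\dots,|x_p|\,\|w\|\big)=\|w\|\cdot\big(|x_1|,\dots,|x_p|\big)$; factoring out $\|w\|$ and recognizing the remaining $\ell_2$ norm as $\|x\|_2$ gives $\|wx\|=\|w\|\,\|x\|_2$.

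Both parts are essentially bookkeeping once the columnwise reduction is in place, and I expect no genuine obstacle here. The only point requiring a little care is the monotonicity-and-homogeneity step for the outer $\ell_2$ norm in the first part, which relies on all entries being nonnegative so that componentwise domination survives squaring and summing; it is also worth being explicit that $\|\mathbf{W}\|$ denotes the norm induced by $\|\cdot\|$, since that is exactly the property invoked entrywise.
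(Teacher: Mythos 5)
Your proof is correct, and it is essentially the canonical argument: the paper itself states this lemma without proof (it is imported from \cite{pu2018push}), and the proof in that reference is exactly your columnwise reduction --- apply the induced-operator-norm bound $\|\mathbf{W}\mathbf{x}^{(j)}\|\le\|\mathbf{W}\|\,\|\mathbf{x}^{(j)}\|$ to each column, then use monotonicity of $\|\cdot\|_2$ on the nonnegative orthant and homogeneity to pull out $\|\mathbf{W}\|$, with the rank-one equality following from homogeneity of the vector norm column by column. Your explicit remarks that $\|\mathbf{W}\|$ must be read as the norm induced by $\|\cdot\|$ and that componentwise domination of nonnegative entries survives the outer $\ell_2$ norm are precisely the right points of care, so there is no gap.
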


\begin{lemma}\label{lemma6}
	There exists constants $\delta_{C,R},\delta_{C,2},\delta_{R,C},\delta_{R,2}>0,$  such that $||\cdot||_C\leq \delta_{C,R}||\cdot||_R$, $||\cdot||_C\leq \delta_{C,2}||\cdot||_2$,$||\cdot||_C\leq \delta_{C,R}||\cdot||_R$, $||\cdot||_R\leq \delta_{R,C}||\cdot||_C$,$||\cdot||_R\leq \delta_{R,2}||\cdot||_2$. Moreover, with a proper rescaling of norms $||\cdot||_R$ and  $||\cdot||_C$, we have $||\cdot||_2\leq ||\cdot||_R$ and $||\cdot||_2\leq ||\cdot||_C$.

\end{lemma}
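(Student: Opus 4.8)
The plan is to observe that Lemma~\ref{lemma6} is essentially a bookkeeping statement of the classical fact that all norms on a finite-dimensional real vector space are equivalent, together with the observation that rescaling a vector norm does not change the operator norm it induces. First I would fix the Euclidean norm $||\cdot||_2$ as a reference and recall the standard compactness argument: on the unit sphere $S=\{x:||x||_2=1\}$, which is compact, any norm $||\cdot||$ is a continuous and strictly positive function. Continuity follows from the triangle inequality and the bound $||x||\leq\big(\sum_i||e_i||^2\big)^{1/2}||x||_2$ (Cauchy--Schwarz applied in the standard basis $\{e_i\}$), so $||\cdot||$ attains a positive minimum $m$ and a finite maximum $M$ on $S$; homogeneity then gives $m||x||_2\leq||x||\leq M||x||_2$ for every $x$. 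Applying this to $||\cdot||_R$ and $||\cdot||_C$ and chaining the two-sided bounds produces the constants $\delta_{C,R},\delta_{C,2},\delta_{R,C},\delta_{R,2}>0$ realizing the listed inequalities.

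For the rescaling claim I would use the lower bounds just obtained, namely $c_R||x||_2\leq||x||_R$ and $c_C||x||_2\leq||x||_C$ for some $c_R,c_C>0$. Whenever $c_R<1$ I replace $||\cdot||_R$ by $||\cdot||_R/c_R$ (and likewise $||\cdot||_C$ by $||\cdot||_C/c_C$); each rescaled map is still a norm and now satisfies $||x||_2\leq||x||_R$ and $||x||_2\leq||x||_C$. The point that makes this compatible with Lemma~\ref{lemma4} is that the operator norm induced by a vector norm is invariant under positive scaling: for $||\cdot||'=\lambda||\cdot||$ with $\lambda>0$ one has $||A||'=\sup_{x\neq0}\lambda||Ax||/(\lambda||x||)=||A||$. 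Hence the contraction factors $\sigma_R=||\mathbf{R}-\mathbf{1}_nu^\top/n||_R<1$ and $\sigma_C=||\mathbf{C}-v\mathbf{1}_{2n}^\top/n||_C<1$ are untouched by the rescaling, so the rescaled norms satisfy both the contraction bounds of Lemma~\ref{lemma4} and the normalization $||\cdot||_2\leq||\cdot||_R$, $||\cdot||_2\leq||\cdot||_C$ simultaneously.

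I do not expect a substantive obstacle, as every ingredient is standard finite-dimensional functional analysis; indeed the paper attributes the result to \cite{pu2018push}. The one step meriting explicit care is the interaction between the rescaling and the earlier lemma: one must confirm that enlarging the vector norms so that they dominate $||\cdot||_2$ does not spoil the sub-unit operator-norm bounds $\sigma_R,\sigma_C<1$ on which the convergence analysis rests. This is exactly where the scale-invariance of induced operator norms is decisive, so I would isolate and state it as the pivotal observation rather than leave it implicit. (A minor note: the displayed list of inequalities repeats the bound $||\cdot||_C\leq\delta_{C,R}||\cdot||_R$, an evident typo that has no bearing on the argument.)
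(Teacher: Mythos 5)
Your proposal is correct and follows essentially the same route as the paper, which gives no proof of its own but defers to \cite{pu2018push}, where the lemma is justified precisely by the equivalence of all norms on a finite-dimensional space together with the harmless rescaling of $||\cdot||_R$ and $||\cdot||_C$. Your explicit observation that rescaling a vector norm leaves the induced operator norm (and hence $\sigma_R,\sigma_C<1$ from Lemma~\ref{lemma4}) unchanged is exactly the point that makes the rescaling legitimate, and the only detail left implicit is the trivial lift of the vector-norm inequalities to the column-wise matrix norms of the paper's Definition.
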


\begin{lemma}\label{lemma8}
	(Lemma 5 in \cite{pu2020distributed}) Given a nonnegative, irreducible matrix $\mathbf{M}= [m_{ij}] \in \mathbf{R}^{3\times3}$ with its diagonal element $m_{11}, m_{22}, m_{33}<\lambda^\star$  for some $\lambda^\star > 0$. A necessary and sufficient condition for $\rho(\mathbf{M}) < \lambda^\star$ is det$(\lambda^\star\mathbf{I}-\mathbf{M})>0$.
\end{lemma}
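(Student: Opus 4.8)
The plan is to recast the inequality $\rho(\mathbf{M})<\lambda^\star$ as a statement about the single matrix $\mathbf{A}:=\lambda^\star\mathbf{I}-\mathbf{M}$ and to read it off from the theory of $M$-matrices. Since $\mathbf{M}$ is nonnegative, every off-diagonal entry of $\mathbf{A}$ equals $-m_{ij}\le 0$, so $\mathbf{A}$ is a $Z$-matrix, and the hypothesis $m_{ii}<\lambda^\star$ says precisely that the three diagonal entries $\lambda^\star-m_{ii}$ of $\mathbf{A}$ are strictly positive. By the very definition of an $M$-matrix, $\mathbf{A}=\lambda^\star\mathbf{I}-\mathbf{M}$ with $\mathbf{M}\ge 0$ is a \emph{nonsingular} $M$-matrix if and only if $\lambda^\star>\rho(\mathbf{M})$. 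Thus the lemma reduces to showing that, under this sign pattern, $\mathbf{A}$ is a nonsingular $M$-matrix if and only if $\det(\lambda^\star\mathbf{I}-\mathbf{M})>0$.

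For this I would invoke the classical characterization (Fiedler--Pt\'ak; see also \cite{horn2012matrix}) that a $Z$-matrix is a nonsingular $M$-matrix if and only if all of its leading principal minors are positive. Write these minors as $D_1=\lambda^\star-m_{11}$, $D_2$ (the top-left $2\times2$ minor of $\mathbf{A}$), and $D_3=\det(\lambda^\star\mathbf{I}-\mathbf{M})$. The necessity direction is then immediate: if $\rho(\mathbf{M})<\lambda^\star$ then $\mathbf{A}$ is a nonsingular $M$-matrix, so all its leading principal minors, in particular $D_3$, are positive. Equivalently, and more transparently, by Perron--Frobenius every eigenvalue $\mu$ of $\mathbf{M}$ satisfies $|\mu|\le\rho(\mathbf{M})<\lambda^\star$, whence $\det(\lambda^\star\mathbf{I}-\mathbf{M})=\prod_\mu(\lambda^\star-\mu)>0$, since the real factors are positive and each complex-conjugate pair contributes $|\lambda^\star-\mu|^2>0$; note this half does not even use the diagonal hypothesis.

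The substance is the sufficiency direction, i.e.\ bridging the gap between the single assumption $D_3>0$ and the full requirement $D_1,D_2,D_3>0$. Here $D_1>0$ is exactly the diagonal condition, so the crux is the implication $D_3>0\Rightarrow D_2>0$, which I would prove by its contrapositive using a direct cofactor expansion. Writing the diagonal entries of $\mathbf{A}$ as positive numbers and each off-diagonal entry as the negative of the corresponding $m_{ij}\ge0$, the expansion collapses to the identity $D_3=(\lambda^\star-m_{33})\,D_2-S$, where $S$ is a sum of products of nonnegative $m_{ij}$'s with the positive factors $\lambda^\star-m_{ii}$ (explicitly, terms such as $(\lambda^\star-m_{11})m_{23}m_{32}$, $(\lambda^\star-m_{22})m_{13}m_{31}$, and the two directed-cycle products $m_{12}m_{23}m_{31}$, $m_{13}m_{21}m_{32}$), so that $S\ge0$. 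Since $\lambda^\star-m_{33}>0$, the assumption $D_2\le0$ would force $D_3=(\lambda^\star-m_{33})D_2-S\le0$; contrapositively $D_3>0$ forces $D_2>0$. Combined with $D_1>0$, all leading principal minors are positive, so $\mathbf{A}$ is a nonsingular $M$-matrix and $\lambda^\star>\rho(\mathbf{M})$, completing the equivalence.

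The step I expect to be the main obstacle is precisely this implication $D_3>0\Rightarrow D_2>0$: it is what excludes the ``wrong-reason'' scenario in which $\det(\lambda^\star\mathbf{I}-\mathbf{M})$ is positive because $\mathbf{M}$ has two eigenvalues above $\lambda^\star$ rather than none. A purely Perron--Frobenius/intermediate-value argument would instead have to rule out a second real eigenvalue of $\mathbf{M}$ lying in $(\max_i m_{ii},\rho(\mathbf{M}))$, which is awkward to control directly; the $Z$-matrix sign pattern is exactly the structural feature that makes the short identity $D_3=(\lambda^\star-m_{33})D_2-S$ decisive, and it is where the nonnegativity of $\mathbf{M}$ and the diagonal hypothesis $m_{ii}<\lambda^\star$ are used essentially. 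I would be careful to verify the bookkeeping of signs in the $3\times3$ expansion (confirming $S\ge0$ term by term) and to quote the $M$-matrix leading-minor characterization in its exact form, as these are the only points where an error could slip in.
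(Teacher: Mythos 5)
The first thing to note is that the paper itself contains no proof of Lemma~\ref{lemma8}: it is imported, statement only, from Lemma 5 of \cite{pu2020distributed}. So there is no in-paper argument to compare yours against, and your proposal has to be judged on its own merits. On that basis it is correct. The pivotal identity checks out: writing $a_i=\lambda^\star-m_{ii}>0$, a direct expansion gives $\det(\lambda^\star\mathbf{I}-\mathbf{M})=a_3\,(a_1a_2-m_{12}m_{21})-S$ with $S=a_1m_{23}m_{32}+a_2m_{13}m_{31}+m_{12}m_{23}m_{31}+m_{13}m_{21}m_{32}\ge 0$, so $D_2\le 0$ would indeed force $D_3\le 0$; contrapositively $D_3>0$ yields $D_2>0$, and together with $D_1=a_1>0$ the Fiedler--Pt\'ak characterization (for a $Z$-matrix, positivity of all leading principal minors is equivalent to being a nonsingular $M$-matrix) delivers $\rho(\mathbf{M})<\lambda^\star$. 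Your Perron--Frobenius product argument for necessity is also sound, and correctly noted to be free of the diagonal hypothesis. Two points are worth tightening. First, the step ``by the very definition of an $M$-matrix, $\lambda^\star\mathbf{I}-\mathbf{M}$ is a nonsingular $M$-matrix iff $\lambda^\star>\rho(\mathbf{M})$'' silently uses that $s-\rho(\mathbf{B})$ is independent of the representation $\mathbf{A}=s\mathbf{I}-\mathbf{B}$, $\mathbf{B}\ge 0$; this follows from $\rho(\mathbf{B}+c\mathbf{I})=\rho(\mathbf{B})+c$ for nonnegative $\mathbf{B}$ and $c\ge 0$, and deserves a sentence. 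Second, your argument never invokes irreducibility, so you have in fact proved the lemma with that hypothesis deleted --- a strengthening rather than an error, but it should be stated explicitly so the reader does not search for where irreducibility enters. Your closing remark is also apt: the $Z$-matrix sign structure is exactly what rules out the ``two eigenvalues above $\lambda^\star$'' scenario that makes a purely spectral proof of sufficiency delicate.
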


	\begin{lemma}\label{lm8}
		 For any $\mathbf{U,V} \in \mathbb{R}^{n\times n}$, the following inequality is satisfied:\vspace*{-2mm}
		\begin{equation}\vspace*{-1mm}
			||\mathbf{U}+\mathbf{V}||^2\leq \tau'||\mathbf{U}||^2+\frac{\tau'}{\tau'-1}||\mathbf{V}||^2,\vspace*{-1mm}
		\end{equation}
		where $\tau'>1$. Moreover, for any $\mathbf{U}_1, \mathbf{U}_2, \mathbf{U}_3 \in \mathbb{R}^{n\times n}$, we have $||\mathbf{U}_1+\mathbf{U}_2+ \mathbf{U}_3||^2\leq \tau_1 ||\mathbf{U}_1||^2+\frac{2\tau_1}{\tau_1-1}(||\mathbf{U}_2||^2+||\mathbf{U}_3||^2)$ and $||\mathbf{U}_1+\mathbf{U}_2+ \mathbf{U}_3||^2\leq 3 ||\mathbf{U}_1||^2+3||\mathbf{U}_2||^2+3||\mathbf{U}_3||^2$. 
	\end{lemma}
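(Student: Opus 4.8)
The plan is to derive all three inequalities from two elementary ingredients: the triangle inequality for the matrix norm $||\cdot||$ introduced above (which is a genuine norm, hence subadditive) together with the scalar Young inequality $2ab\leq \lambda a^2+\lambda^{-1}b^2$, valid for all $a,b\geq 0$ and $\lambda>0$. First I would establish the two-term bound. Subadditivity gives $||\mathbf{U}+\mathbf{V}||^2\leq(||\mathbf{U}||+||\mathbf{V}||)^2=||\mathbf{U}||^2+2||\mathbf{U}||\,||\mathbf{V}||+||\mathbf{V}||^2$. The only nonroutine choice is the Young parameter: taking $\lambda=\tau'-1$ (legitimate since $\tau'>1$) bounds the cross term by $(\tau'-1)||\mathbf{U}||^2+(\tau'-1)^{-1}||\mathbf{V}||^2$, and substituting back yields exactly $\tau'||\mathbf{U}||^2+\frac{\tau'}{\tau'-1}||\mathbf{V}||^2$ once one uses the identity $1+(\tau'-1)^{-1}=\tau'/(\tau'-1)$.

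Next I would bootstrap to the first three-term inequality by grouping. Writing $\mathbf{U}_1+\mathbf{U}_2+\mathbf{U}_3=\mathbf{U}_1+(\mathbf{U}_2+\mathbf{U}_3)$ and applying the two-term bound with $\mathbf{U}=\mathbf{U}_1$, $\mathbf{V}=\mathbf{U}_2+\mathbf{U}_3$ and $\tau'=\tau_1$ produces a leading term $\tau_1||\mathbf{U}_1||^2$ plus $\frac{\tau_1}{\tau_1-1}||\mathbf{U}_2+\mathbf{U}_3||^2$. I then discharge the last term with the special case $||\mathbf{A}+\mathbf{B}||^2\leq 2||\mathbf{A}||^2+2||\mathbf{B}||^2$ (itself the two-term bound at $\tau'=2$, equivalently $(a+b)^2\leq 2a^2+2b^2$), which converts the factor into $\frac{2\tau_1}{\tau_1-1}$ on $||\mathbf{U}_2||^2+||\mathbf{U}_3||^2$, as claimed.

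Finally, the symmetric three-term bound follows from subadditivity followed by the Cauchy--Schwarz inequality in $\mathbb{R}^3$: from $||\mathbf{U}_1+\mathbf{U}_2+\mathbf{U}_3||\leq ||\mathbf{U}_1||+||\mathbf{U}_2||+||\mathbf{U}_3||$ and $(a+b+c)^2\leq 3(a^2+b^2+c^2)$ the estimate is immediate.

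I do not anticipate a genuine obstacle, since the statement is an elementary inequality and each step reduces to the triangle inequality plus a single scalar estimate. The only points that merit a moment's care are verifying that $||\cdot||$ is indeed a norm (so that subadditivity is available), which holds because it is the composition of the $\ell_2$ norm with the coordinatewise application of a vector norm, and ensuring the Young parameter stays positive, which is guaranteed by the standing hypotheses $\tau'>1$ and (implicitly) $\tau_1>1$.
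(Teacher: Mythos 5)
Your proposal is correct, and in fact the paper offers no proof of this lemma at all---it is asserted as a standard fact---so there is nothing to diverge from: your derivation (triangle inequality, then Young's inequality $2ab\leq \lambda a^2+\lambda^{-1}b^2$ with $\lambda=\tau'-1$, then bootstrapping to three terms by grouping and by $(a+b+c)^2\leq 3(a^2+b^2+c^2)$) is the standard argument and each step checks out. Your care in noting that the paper's column-wise matrix norm is a genuine norm is warranted, since the two-term bound cannot be obtained by expanding an inner product here (the norm need not be induced by one), so the triangle-inequality route you chose is not just adequate but necessary; the only cosmetic mismatch is that the lemma is stated for square matrices while the paper applies it to $\mathbb{R}^{n\times p}$, and your argument covers that general case without change.
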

\subsection{Main results}
The following critical lemma establishes a linear system of inequalities that  bound $\mathbb{E}[||\bar x_{k+1}-x^{\star\top}||_2^2], \mathbb{E}[||\mathbf{x}_{k+1}-\mathbf{1}_n\bar x_{k+1}||_R^2],$ and $\mathbb{E}[||\mathbf{\tilde y}_{k+1}-v\bar y_{k+1}||_C^2]$.

\begin{lemma}\label{mainl}
	Under Assumptions \ref{asp}-\ref{asp2}, when $\eta'<1/(\mu+L)$, we have the following linear system of inequalities:
	\begin{equation}\label{ine}
	\begin{aligned}
&\begin{bmatrix}
	\mathbb{E}[||\bar x_{k+1}-x^{\star\top}||_2^2|\mathcal{F}_k]\\
	\mathbb{E}[||\mathbf{x}_{k+1}-\mathbf{1}_n\bar x_{k+1}||_R^2|\mathcal{F}_k]\\
	\mathbb{E}[||\mathbf{\tilde y}_{k+1}-v\bar y_{k+1}]||_C^2|\mathcal{F}_k]
\end{bmatrix}\leq \mathbf{A}\begin{bmatrix}
	\mathbb{E}[||\bar x_{k}-x^{\star\top}||_2^2|\mathcal{F}_k]\\
	\mathbb{E}[||\mathbf{x}_{k}-\mathbf{1}_n\bar x_{k}||_R^2|\mathcal{F}_k]\\
	\mathbb{E}[||\mathbf{\tilde y}_{k}-v\bar y_{k}]||_C^2|\mathcal{F}_k]
\end{bmatrix}\\
&\qquad \qquad \qquad \qquad \qquad \qquad \qquad \qquad \qquad \qquad \qquad +\mathbf{b},		
	\end{aligned}
	\end{equation}	where the inequality is taken component-wise, and elements of the transition matrix $\mathbf{A} = [a_{ij}]$ and the vector $\mathbf{b}$ are given by:
\begin{equation}
	\mathbf{A}=\begin{bmatrix}
		1-\eta' \mu & c_1\eta & c_2\eta\\
		c_4\eta^2 & \frac{1+\sigma_R^2}{2}+c_5\eta^2 & c_6\eta^2 \\
		c_8 \eta^2 & c_9 &\frac{1+\sigma_C^2}{2}+c_{10}\eta^2\\
	\end{bmatrix},
\end{equation}
and \begin{equation}
	\mathbf{b}=\begin{bmatrix}
		c_3\eta^2& c_7\eta^2&c_{11}
	\end{bmatrix}^\top \bar\theta^2,
		\end{equation}
		respectively, where constants $c_i$'s and $b_1$ are defined in \eqref{c1}, \eqref{c2} and \eqref{c3}.
		\end{lemma}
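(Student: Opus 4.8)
The plan is to establish \eqref{ine} row by row, obtaining each of the three bounds from the exact recursion already derived for the corresponding error: \eqref{m1} (equivalently \eqref{p1}) for the optimization gap $\bar x_k-x^{\star\top}$, \eqref{m2} for the $x$-consensus error $\mathbf{x}_k-\mathbf{1}_n\bar x_k$, and \eqref{m3} for the $y$-tracking error $\mathbf{\tilde y}_k-v\bar y_k$. The organizing device throughout is to split each next-step error, conditioned on $\mathcal{F}_k$, into its conditional mean (which is $\mathcal{F}_k$-measurable) and a zero-mean fresh-noise \emph{fluctuation}; since $\mathbb{E}[\mathbf{\xi}_k\mid\mathcal{F}_k]=0$ these two parts are orthogonal, so the conditional second moment is the sum of the squared mean and the fluctuation variance, with no surviving cross terms. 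The squared mean is bounded by combining the contraction estimates (Lemma \ref{lemma3} for the gradient map, Lemma \ref{lemma4} for the mixing matrices) with the weighted Young inequality of Lemma \ref{lm8}, while the fluctuation variance is controlled by the Laplace second-moment bound of Lemma \ref{lemma2}; norm conversions rely on Lemmas \ref{lemma5}--\ref{lemma6}.

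For the first row I would write $\bar x_{k+1}-x^{\star\top}$ from \eqref{m1} as the contractive block $\bar x_k-\eta'g_k-x^{\star\top}$ minus the gradient-tracking gap $\eta'(h_k-g_k)$, the projected $y$-error $\tfrac{\eta}{n}u^\top\mathbf{T}(\mathbf{\tilde y}_k-v\bar y_k)$, and the explicit noise $\tfrac{\eta'}{n}\mathbf{1}_n^\top\mathbf{\xi}_k$. Under $\eta'<1/(\mu+L)$, Lemma \ref{lemma3} gives $\|\bar x_k-\eta'g_k-x^{\star\top}\|_2\le(1-\mu\eta')\|\bar x_k-x^{\star\top}\|_2$, and Lemma \ref{lemma2} replaces $\|h_k-g_k\|_2$ by $\tfrac{L}{\sqrt n}\|\mathbf{x}_k-\mathbf{1}_n\bar x_k\|_2$. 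Applying the three-term form of Lemma \ref{lm8} to the conditional mean with $\tau_1=1/(1-\mu\eta')$ makes the squared contraction factor collapse to first power, yielding $a_{11}=1-\eta'\mu$, while the companion weight $\tfrac{2\tau_1}{\tau_1-1}=O(1/\eta)$ multiplies the $O(\eta^2)$ squared off-diagonal terms to leave $a_{12}=c_1\eta$ and $a_{13}=c_2\eta$ (the $y$-error being converted to $\|\cdot\|_C$ by Lemma \ref{lemma6}). Crucially, the fresh noise in $\mathbf{\tilde y}_k$ and the explicit term both enter multiplied by $\eta$, so the fluctuation variance is $O(\eta^2\bar\theta^2)$, which is exactly $b_1=c_3\eta^2\bar\theta^2$.

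Rows two and three follow the same template on \eqref{m2} and \eqref{m3}, where the leading operators are the mixing matrices with contraction factors $\sigma_R,\sigma_C<1$ from Lemma \ref{lemma4}; choosing the Young parameter as $(1+\sigma_R^2)/(2\sigma_R^2)$, respectively $(1+\sigma_C^2)/(2\sigma_C^2)$, produces the diagonals $\tfrac{1+\sigma_R^2}{2}+c_5\eta^2$ and $\tfrac{1+\sigma_C^2}{2}+c_{10}\eta^2$. In row two the only driving term is $\eta(\mathbf{R}-\mathbf{1}_nu^\top/n)\mathbf{T}\mathbf{\tilde y}_k$, which, after inserting $\mathbf{\tilde y}_k=(\mathbf{\tilde y}_k-v\bar y_k)+v\bar y_k$ and $\bar y_k=h_k+\tfrac1n\mathbf{1}_n^\top\mathbf{\xi}_k$, contributes $a_{21}=c_4\eta^2$, $a_{23}=c_6\eta^2$ and the offset $b_2=c_7\eta^2\bar\theta^2$. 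In row three the term $\nabla\tilde F(\mathbf{x}_{k+1})-\nabla\tilde F(\mathbf{x}_{k})$ must be separated into its gradient block, bounded by $L\|\mathbf{x}_{k+1}-\mathbf{x}_k\|$ and then through \eqref{xstate} into the tracked quantities, and its noise block $\mathbf{\xi}_{k+1}-\mathbf{\xi}_k$. Because $\mathbf{x}_{k+1}-\mathbf{x}_k$ retains the $x$-consensus error at order $O(1)$ through $\mathbf{R}-\mathbf{I}_n$, the coupling $a_{32}=c_9$ carries no factor of $\eta$, and because the noise block enters at order $O(1)$ the offset $b_3=c_{11}\bar\theta^2$ likewise has no $\eta$; the remaining pieces give $a_{31}=c_8\eta^2$ and the $c_{10}\eta^2$ correction.

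I expect the main obstacle to be the careful accounting of the noise $\mathbf{\xi}_k$, which is not confined to the explicit perturbation: it is already embedded in $\mathbf{\tilde y}_k$, and hence in $\bar y_k$ through \eqref{sumy}, so the $y$-error $\mathbf{\tilde y}_k-v\bar y_k$ appearing on the right-hand side is itself random given $\mathcal{F}_k$ and only becomes deterministic after the conditional expectation. The delicate point is that the fresh-noise contribution must be extracted as a genuinely zero-mean fluctuation rather than absorbed into the weighted-Young split: in rows one and two this is what keeps $b_1,b_2$ at order $\eta^2$ (a cruder bound would leave the offsets at order $\eta$), whereas in row three, where the noise enters undamped by the stepsize, the same accounting correctly produces the $\eta$-free offset $b_3$. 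The secondary, purely mechanical, difficulty is calibrating every Young parameter so that the diagonal entries collapse to $1-\eta'\mu$, $\tfrac{1+\sigma_R^2}{2}$ and $\tfrac{1+\sigma_C^2}{2}$ and the off-diagonals emerge with exactly the advertised powers of $\eta$; once this is done, reading off the constants $c_i$ is a matter of bundling the accumulated norm-equivalence and Lipschitz factors.
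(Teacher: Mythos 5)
Your proposal is correct and follows essentially the same route as the paper's proof: the same row-by-row treatment of \eqref{m1}, \eqref{m2} and \eqref{m3}, the same supporting lemmas (Lemma \ref{lemma2}, \ref{lemma3}, \ref{lemma4}, \ref{lemma5}, \ref{lemma6}, \ref{lm8}), and the same Young-parameter choices $\tau_1=1/(1-\eta'\mu)$ and $\tau_2=(1+\sigma_R^2)/(2\sigma_R^2)$ (with its $C$-analogue), producing exactly the claimed entries of $\mathbf{A}$ and $\mathbf{b}$. Your mean/fluctuation orthogonality framing is, if anything, a slightly more careful handling of the point where the paper is loose (it drops the cross term between $\frac{\eta}{n}u^\top\mathbf{T}(\mathbf{\tilde y}_k-v\bar y_k)$ and $\frac{\eta'}{n}\mathbf{1}_n^\top\mathbf{\xi}_k$, both of which involve the fresh noise $\mathbf{\xi}_k$), while the paper instead routes the noise through the bound on $\mathbb{E}[\|\bar y_k\|_2^2\mid\mathcal{F}_k]$ in row two and the block decomposition of $\nabla\tilde F(\mathbf{x}_{k+1})-\nabla\tilde F(\mathbf{x}_k)$ in row three, a purely cosmetic difference.
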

\vspace*{2mm}
\begin{proof}
   See Appendix	\ref{ap1}. 
\end{proof}

The following theorem shows the convergence properties for the SD-Push-Pull algorithm in \eqref{eqalg}.
\begin{theorem}\label{th1}
	Suppose Assumption \ref{asp}-\ref{asp2} holds and the stepsize $\eta$ satisfies
	$$\eta\leq\min\big\{\sqrt{\frac{1-\sigma_R^2}{6c_5}}, \sqrt{\frac{1-\sigma_C^2}{6c_{10}}},\sqrt{\frac{2d_3}{d_2+\sqrt{d_2^2+4d_1d_3}}}
\big\},$$
	where $d_1,d_2,d_3$ are defined in \eqref{ds}. Then $\sup_{l\geq k} \mathbb{E}[||\bar x_{l}-x^{\star\top}||_2^2]$ and $\sup_{l\geq k}\mathbb{E}[||\mathbf{x}_{l}-\mathbf{1}_n\bar x_{l}||_R^2]$ converge to $\lim\sup_{k\rightarrow\infty} \mathbb{E}[||\bar x_{k}-x^{\star\top}||_2^2]$ and  $\lim\sup_{k\rightarrow\infty} \mathbb{E}[||\mathbf{x}_{k}-\mathbf{1}_n\bar x_{k}||_R^2]$, respectively, at the linear rate $\mathcal{O}(\rho(\mathbf{A})^k)$, where $\rho(\mathbf{A})<1$. In addition,
	\begin{equation}\label{eqf}
		\begin{aligned}
			&\limsup_{k\rightarrow\infty} \mathbb{E}[||\bar x_{k}-x^{\star\top}||_2^2]\leq [\mathbf{(I-A)^{-1}B}]_1,\\
			&\limsup_{k\rightarrow\infty} \mathbb{E}[||\mathbf{x}_{k}-\mathbf{1}_n\bar x_{k}||_R^2]\leq [\mathbf{(I-A)^{-1}B}]_2,\\
		\end{aligned}
	\end{equation}
	where $[\mathbf{(I-A)^{-1}B}]_i$ denotes the $i$th element of the vector $\mathbf{(I-A)^{-1}B}$. Their specific forms are given in \eqref{bound1} and \eqref{bound2}, respectively. 
\end{theorem}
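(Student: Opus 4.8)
The plan is to treat Lemma~\ref{mainl} as a black box and reduce the entire statement to the single spectral fact $\rho(\mathbf{A})<1$, after which convergence and the limiting bounds \eqref{eqf} fall out from standard properties of nonnegative linear recursions. First I would remove the conditioning in \eqref{ine}: each error quantity at time $k$ on the right-hand side is $\mathcal{F}_k$-measurable (it is a function of $\xi_0,\dots,\xi_{k-1}$ and the initial data), so $\mathbb{E}[\cdot\mid\mathcal{F}_k]$ acts as the identity on it. Writing $\mathbf{V}_k$ for the vector of the three expected squared errors and taking total expectation via the tower property converts \eqref{ine} into the unconditional recursion $\mathbb{E}[\mathbf{V}_{k+1}]\leq \mathbf{A}\,\mathbb{E}[\mathbf{V}_k]+\mathbf{b}$. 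Using the nonnegativity of $\mathbf{A}$ to iterate the inequality then gives $\mathbb{E}[\mathbf{V}_k]\leq \mathbf{A}^k\mathbb{E}[\mathbf{V}_0]+\sum_{j=0}^{k-1}\mathbf{A}^j\mathbf{b}$.

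The heart of the proof is therefore establishing $\rho(\mathbf{A})<1$ under the stated stepsize bound. Since for $\eta>0$ every off-diagonal entry of $\mathbf{A}$ is strictly positive --- in particular $a_{32}=c_9$ is a positive constant independent of $\eta$ --- the matrix $\mathbf{A}$ is nonnegative and irreducible, so I would apply Lemma~\ref{lemma8} with $\lambda^\star=1$. This reduces the task to checking that the three diagonal entries lie below $1$ and that $\det(\mathbf{I}-\mathbf{A})>0$. The diagonal conditions $1-\eta'\mu<1$, $\tfrac{1+\sigma_R^2}{2}+c_5\eta^2<1$, and $\tfrac{1+\sigma_C^2}{2}+c_{10}\eta^2<1$ are immediate: the first holds because $\eta'>0$, and the last two follow (with room to spare) from the first two terms $\sqrt{(1-\sigma_R^2)/6c_5}$ and $\sqrt{(1-\sigma_C^2)/6c_{10}}$ of the stepsize bound.

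The remaining condition $\det(\mathbf{I}-\mathbf{A})>0$ is where the real work lies, and I expect it to be the main obstacle. Expanding the $3\times 3$ determinant and collecting powers of $\eta$ (using that $\eta'$ is proportional to $\eta$) yields a polynomial in $t=\eta^2$; after factoring out the positive contribution $\eta'\mu$ from the $(1,1)$ entry, the inequality collapses to a quadratic $d_1 t^2 + d_2 t - d_3 \leq 0$ with $d_1,d_3>0$. Its nonnegative root is exactly $t \leq \tfrac{2d_3}{d_2+\sqrt{d_2^2+4d_1d_3}}$, which is precisely the third term in the stepsize bound, so that this term is what forces $\det(\mathbf{I}-\mathbf{A})>0$. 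The delicate part here is the bookkeeping of the cross terms among the $c_i$ when regrouping them into $d_1,d_2,d_3$, and verifying that each coefficient carries the claimed sign; this is routine but error-prone and is the step I would carry out most carefully.

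Finally, with $\rho(\mathbf{A})<1$ in hand, $\mathbf{A}^k\to 0$ at the geometric rate $\mathcal{O}(\rho(\mathbf{A})^k)$ and the Neumann series $\sum_{j\geq 0}\mathbf{A}^j=(\mathbf{I}-\mathbf{A})^{-1}$ converges with nonnegative entries. Passing to the limit in $\mathbb{E}[\mathbf{V}_k]\leq \mathbf{A}^k\mathbb{E}[\mathbf{V}_0]+\sum_{j=0}^{k-1}\mathbf{A}^j\mathbf{b}$ gives $\limsup_{k\to\infty}\mathbb{E}[\mathbf{V}_k]\leq(\mathbf{I}-\mathbf{A})^{-1}\mathbf{b}$, and reading off the first two components produces \eqref{eqf}. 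The transient term $\mathbf{A}^k\mathbb{E}[\mathbf{V}_0]$ and the vanishing tail $\sum_{j\geq k}\mathbf{A}^j\mathbf{b}$ both decay like $\rho(\mathbf{A})^k$, which yields the asserted $\mathcal{O}(\rho(\mathbf{A})^k)$ convergence of the suprema $\sup_{l\geq k}\mathbb{E}[\cdot]$ to their limit superiors, completing the argument.
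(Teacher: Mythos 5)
Your proposal is correct and takes essentially the same route as the paper: iterate the componentwise inequality of Lemma \ref{mainl}, apply Lemma \ref{lemma8} with $\lambda^\star=1$ (diagonal entries below $1$ via the first two stepsize terms, determinant positivity via the quadratic $d_1\eta^4+d_2\eta^2-d_3<0$ whose positive root gives the third term), and finish with the Neumann series $\sum_{j\ge 0}\mathbf{A}^j=(\mathbf{I}-\mathbf{A})^{-1}$ to obtain \eqref{eqf}. The only difference is that the paper enforces the stronger condition $\det(\mathbf{I}-\mathbf{A})>\tfrac{1}{2}(1-a_{11})(1-a_{22})(1-a_{33})$ rather than mere positivity, which is exactly what it uses to produce the explicit closed-form bounds \eqref{bound1} and \eqref{bound2} cited in the theorem statement; your version establishes $\rho(\mathbf{A})<1$ and \eqref{eqf} but would need that extra lower bound on the determinant to recover those explicit forms.
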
 
\begin{proof} In terms of Lemma \ref{mainl}, by induction we have
\begin{equation}\label{eqi}
\begin{aligned}
&\begin{bmatrix}
	\mathbb{E}[||\bar x_{k}-x^{\star\top}||_2^2|\mathcal{F}_k]\\
	\mathbb{E}[||\mathbf{x}_{k}-\mathbf{1}_n\bar x_{k}||_R^2|\mathcal{F}_k]\\
	\mathbb{E}[||\mathbf{\tilde y}_{k}-v\bar y_{k}]||_C^2|\mathcal{F}_k]
\end{bmatrix}\leq \mathbf{A}^k
\begin{bmatrix}
	\mathbb{E}[||\bar x_{0}-x^{\star\top}||_2^2]\\
	\mathbb{E}[||\mathbf{x}_{0}-\mathbf{1}_n\bar x_{0}||_R^2]\\
	\mathbb{E}[||\mathbf{\tilde y}_{0}-v\bar y_{0}]||_C^2]
\end{bmatrix}\\
&\qquad \qquad \qquad \qquad \qquad \qquad \quad \quad\quad \quad +\sum_{l=0}^{k-1}\mathbf{A}^l\mathbf{b}.
\end{aligned}
\end{equation}	
From equation \eqref{eqi}, we can see that if  $\rho(\mathbf{A})<1$, then $\sup_{l\geq k} \mathbb{E}[||\bar x_{l}-x^{\star\top}||_2^2]$, $\sup_{l\geq k}\mathbb{E}[||\mathbf{x}_{l}-\mathbf{1}_n\bar x_{l}||_R^2]$ and $\sup_{l\geq k}\mathbb{E}[||\mathbf{\tilde y}_{l}-v\bar y_{l}||_C^2]$ all converge to a neighborhood of $0$ at the linear rate $\mathcal{O}(\rho(\mathbf{A})^k)$. 


In view of Lemma \ref{lemma8}, it suffices to ensure $a_{11},a_{22}, a_{33}<1$ and det$(\mathbf{I-A})>0$, or
\begin{equation}\label{det}\begin{aligned}
	&\text{det}(\mathbf{I-A})=(1-a_{11})(1-a_{22})(1-a_{33})-a_{12}a_{23}a_{31}\\
	&-a_{13}a_{21}a_{32}-(1-a_{22})a_{13}a_{31}-(1-a_{11})a_{23}a_{32}\\
	&-(1-a_{33})a_{12}a_{21}>\frac{1}{2}(1-a_{11})(1-a_{22})(1-a_{33}),
	\end{aligned}\end{equation}
	which is equivalent to
	\begin{equation}\label{eqd}\begin{aligned}
		\frac{1}{2}(1-&a_{11})(1-a_{22})(1-a_{33})-c_1c_6c_8\eta^5\\&-c_2c_4c_9\eta^3-(1-a_{22})c_2c_8\eta^3-(1-a_{11})c_6c_9\eta^2\\
		&-(1-a_{33})c_1c_4\eta^3>0.
	\end{aligned} 	\end{equation}
	
	Next, we  give some sufficient conditions where $a_{11}, a_{22}, a_{33} < 1$ and relation \eqref{eqd} holds true.
	
	 First, $a_{11} < 1$ is ensured by choosing $ \eta'\leq 1/(\mu+L)$. In addition, $a_{22},a_{33}<1$ is ensured by choosing
	 \begin{equation}\label{eqr}
	 	1-a_{22}\geq\frac{1-\sigma_R^2}{3},\quad 1-a_{33}\geq\frac{1-\sigma_C^2}{3},
	 \end{equation}
	 requiring
	 \begin{equation}\label{eqeta}
	 	\eta \leq \min\Big\{\sqrt{\frac{1-\sigma_R^2}{6c_5}}, \sqrt{\frac{1-\sigma_C^2}{6c_{10}}}\Big\} .
	 \end{equation}
	 
	 Second, in view of relation \eqref{eqr}, $a_{22}>\frac{1+\sigma_R^2}{2},$ and	$a_{33}>\frac{1+\sigma_C^2}{2},$ we have
\begin{equation}\begin{aligned}
\frac{1}{2}(1-&a_{11})(1-a_{22})(1-a_{33})-c_1c_6c_8\eta^5-c_2c_4c_9\eta^3\\
&-(1-a_{22})c_2c_8\eta^3-(1-a_{11})c_6c_9\eta^2-(1-a_{33})c_1c_4\eta^3\\
		&>\frac{\eta'\mu}{2}\frac{1-\sigma_R^2}{3}\frac{1-\sigma_C^2}{3}-c_1c_6c_8\eta^5-c_2c_4c_9\eta^3-\eta'\mu c_6c_9\eta^2\\
		&-\frac{1-\sigma_R^2}{2}c_2c_8\eta^3-\frac{1-\sigma_C^2}{2}c_1c_4\eta^3.	\end{aligned}
	\end{equation}
	
	Then, relation \eqref{eqd} is equivalent to 
	$$d_1 \eta^4+ d_2\eta^2-d_3<0,$$
	where
	\begin{equation}\label{ds}
		\begin{aligned}
			&d_1:=c_1c_6c_8,\\
			&d_2:=c_2c_4c_9+\frac{1-\sigma_R^2}{2}c_2c_8+\frac{1-\sigma_C^2}{2}c_1c_4+\frac{u^\top\mathbf{T}v\mu c_6c_9}{n},\\
			&d_3:=\frac{u^\top\mathbf{T}v\mu}{18n}(1-\sigma_R^2)(1-\sigma_C^2).\\
		\end{aligned}
	\end{equation}
	
	Hence, a sufficient condition for det$(\mathbf{I-A})>0$ is
\begin{equation}\label{eqeta2}
	\eta\leq\sqrt{\frac{2d_3}{d_2+\sqrt{d_2^2+4d_1d_3}}}.
\end{equation}
Relation \eqref{eqeta} and  \eqref{eqeta2} yield the final bound on the stepsize $\eta$. 

Moreover, in light of \eqref{det} and \eqref{eqr}, we can obtain from \eqref{eqi} that
\begin{equation} \label{bound1}
	\begin{aligned}
		&[ \mathbf{(I-A)^{-1}B}]_1\\
		&=\Big[((1-a_{11})(1-a_{33})-a_{23}a_{32})c_3\eta^2\bar\theta^2\\
		&+(a_{13}a_{32}+a_{12}(1-a_{33}))c_7\eta^2\bar\theta^2\\
		&+(a_{12}a_{23}+a_{13}(1-a_{22}))c_{11}\bar\theta^2\Big]\frac{1}{\text{det}(\mathbf{I-A})}\\
		&\le \frac{18\bar\theta^2}{\eta'\mu(1-\sigma_R^2)(1- \sigma_C^2)}\Big[\Big(\frac{(1-\sigma_R^2)(1- \sigma_C^2)}{4}-c_6c_9\eta^2\Big)c_3\eta^2 \\
		&+\Big(c_2c_9+\frac{c_1(1- \sigma_C^2)}{2}\Big)c_7\eta^3
		+\Big(c_1c_6\eta^3+\frac{c_2\eta(1- \sigma_R^2)}{2}\Big)c_{11}\Big],\\
	\end{aligned}
\end{equation}
and
\begin{equation}\label{bound2}
	\begin{aligned}
		&[ \mathbf{(I-A)^{-1}B}]_2\\
&=\Big[(a_{23}a_{31}+a_{21}(1-a_{33}))c_3\eta^2\bar\theta^2\\
		&+(((1-a_{11})(1-a_{33})-a_{13}a_{31}))c_7\eta^2\bar\theta^2\\
		&+(a_{13}a_{21}+a_{23}(1-a_{11}))c_{11}\bar\theta^2\Big]\frac{1}{\text{det}(\mathbf{I-A})}\\
		&\le \frac{18\bar\theta^2}{\eta'\mu(1-\sigma_R^2)(1- \sigma_C^2)}\Big[\Big(c_6c_8\eta^4+\frac{c_4\eta^2(1- \sigma_C^2)}{2}\Big)c_3\eta^2 \\
		&+\Big(\frac{\eta'\mu(1- \sigma_C^2)}{2}-c_2c_8\eta^3\Big)c_7\eta^2
		+\Big(c_2c_4\eta^3+c_6\eta^2\eta'\mu\Big)c_{11}\Big].\\
	\end{aligned}
\end{equation}
	  \end{proof} 
\begin{remark}
	When $\eta$ is sufficiently small, it  can be shown that the linear rate indicator $\rho(\mathbf{A})\simeq 1-\eta'\mu$. From Theorem \ref{th1}, it is worth noting that the upper bounds in \eqref{bound1} and  \eqref{bound2} are functions of $\eta,\bar\theta $ and other problem parameters, and they are decreasing in terms of $\bar\theta$.  Fixing the system parameter and the privacy level $\min_i \epsilon_i$,  the optimization accuracy has the order of $d\sim O( 1/\min_i \epsilon_i)
$ for small $\epsilon_i$. As $\epsilon_i$ converges to 0, that is, for
complete privacy for each agents, the accuracy becomes arbitrarily low.

\end{remark}


\section{Differential Privacy Analysis}\label{privacy}
In this section, we analyze  the differential privacy property of SD-Push-Pull. The observation $\mathcal{O}_k$ denotes the message transmitted between agents, where {\color{black}$\mathcal{O}_k=\{x_{i,k}-\eta(y_{i,k+1}^\alpha-y_{i,k}^\alpha), c_{ji}y_{i,k}^\alpha| \forall i,j \in{\mathcal{N}}\}$}. {\color{black}Considering the differential privacy of agent $i_0$'s objective function $f_{i_0}$, we assume there exists a type of passive adversary called eavesdropper defined as follows, who is interested in the agent $i_0$.
\begin{definition}
	An eavesdropper is an external adversary who has access to all transmitted data $\mathcal{O}_k,\forall k$ by eavesdropping on the communications among the agents. Moreover, he  knows all other agents' functions  $\{f_j\}_{j\neq i_0}$, the network topology $\{\mathbf{R}, \mathbf{C}\}$ and the initial value of all agents $\{\mathbf{x}_0, \mathbf{y}_0\}$. 
	
	 \end{definition}}

	Before analyzing the differential privacy, we need the following assumption to bound the gradient of the objective function $f_i$ in the adjacency definition in Definition \ref{df1}.  
{\color{black}\begin{assumption}\label{asp3}
	Given a finite number of iterations $k\le K$, the gradients of all local objective functions $
	\nabla f_i(\mathbf{x_{i,k}}), \forall i\in \mathcal{N}, \forall k\leq K$, are bounded, i.e., there exists a positive constant $C$ such that for all $k\le K, ||\nabla f_i(\mathbf{x_{i,k}})||\le C, \forall i\in \mathcal{N}.$
\end{assumption}}

Next, we derive condition on the noise variance under which SD-Push-Pull satisfies $\epsilon$-differential privacy. 

{\color{black}\begin{theorem}
Given a finite number of iterations $k\le K$, under Assumption \ref{asp3}, SD-Push-Pull preserves $\epsilon_i$-differential privacy for a given $\epsilon_i>0$ of each agent $i$'s cost function if the noise parameter in \eqref{ya} satisfy $\theta_i >2\sqrt{p}CK/\epsilon_i$.

\end{theorem}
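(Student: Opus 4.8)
The plan is to fix two adjacent function sets $\mathcal{S}^{(1)}$ and $\mathcal{S}^{(2)}$ that differ only in the $i_0$-th component, fix an arbitrary observation trajectory $\mathcal{O}=\{\mathcal{O}_k\}_{k\le K}$, and bound the likelihood ratio $P\{\mathcal{A}(\mathcal{S}^{(1)})\in\mathcal{O}\}/P\{\mathcal{A}(\mathcal{S}^{(2)})\in\mathcal{O}\}$ by $e^{\epsilon_{i_0}}$. Since the Laplacian noise $\{\xi_{i,k}\}$ is the only randomness, I would build an explicit bijection on the noise space under which the two runs emit exactly the same observation, and then control the change of the joint noise density it induces. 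Concretely, run the algorithm on $\mathcal{S}^{(1)}$ with noise $\{\xi_{i,k}^{(1)}\}$; keep $\xi_{i,k}^{(2)}=\xi_{i,k}^{(1)}$ for every $i\neq i_0$, and for $i=i_0$ set
\[ \xi_{i_0,k}^{(2)}=\xi_{i_0,k}^{(1)}+(1-\beta_{i_0})\big(y_{i_0,k}^{\beta,(1)}-y_{i_0,k}^{\beta,(2)}\big), \]
where $y_{i_0,k}^{\beta,(2)}$ is the hidden substate generated by the second run. Because the shift depends only on noise from strictly earlier times, this map is triangular in the time index with unit diagonal, hence a bijection with unit Jacobian.

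The central claim, which I regard as the main obstacle, is that under this coupling the two runs produce identical $x_{i,k}$ and $y_{i,k}^\alpha$ for all $i$ and $k$, so that every transmitted quantity in $\mathcal{O}_k$, namely the pushed $\tilde C_{li}y_{i,k}^\alpha$ and the pulled $x_{j,k}-\eta(y_{j,k+1}^\alpha-y_{j,k}^\alpha)$, coincides. I would prove this by induction on $k$, the base case following from the shared initialization. For the step, equation \eqref{yb} shows that for $i\neq i_0$ the substate $y_{i,k+1}^\beta$ agrees (same $x_{i,k}$, same $f_i$), and then \eqref{ya} with the unchanged noise makes $y_{i,k+1}^\alpha$ agree. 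For $i=i_0$ the gradient differs, so $y_{i_0,k+1}^\beta$ differs; but this substate is never transmitted, and the only way it can influence the observation is through \eqref{ya}, where the definition of $\xi_{i_0,k}^{(2)}$ is chosen exactly so that $(1-\beta_{i_0})y_{i_0,k}^\beta+\xi_{i_0,k}$ is invariant, forcing $y_{i_0,k+1}^\alpha$ to agree as well. Equality of all $y_{\cdot,k+1}^\alpha$ then propagates through the pull update \eqref{d1} to give $x_{i,k+1}$. This is precisely where state decomposition is essential: the unshared substate $y_{i_0}^\beta$ absorbs the entire effect of perturbing $f_{i_0}$, at the cost of only a bounded adjustment of the shared noise.

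Next I would quantify the required noise adjustment. Subtracting the two copies of \eqref{yb} and using that $x_{i_0,k}$ and $y_{i_0,k}^\alpha$ coincide gives the scalar-coefficient recursion $\Delta_{k+1}=\beta_{i_0}\Delta_k+\big(\nabla f_{i_0}^{(1)}-\nabla f_{i_0}^{(2)}\big)(x_{i_0,k})$ for $\Delta_k:=y_{i_0,k}^{\beta,(1)}-y_{i_0,k}^{\beta,(2)}$, with $\Delta_0=0$. Unrolling the recursion and invoking Assumption \ref{asp3}, so that each gradient difference has norm at most $2C$, together with $\beta_{i_0}\in(0,1)$, the geometric sum yields $||\Delta_k||\le 2C(1-\beta_{i_0}^k)/(1-\beta_{i_0})$. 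Hence the per-step noise shift satisfies $||\xi_{i_0,k}^{(2)}-\xi_{i_0,k}^{(1)}||_2=(1-\beta_{i_0})||\Delta_k||\le 2C$, a bound that is uniform in $k$; the geometric decay contributed by $\beta_{i_0}$ is what prevents the sensitivity from growing with the horizon.

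Finally I would bound the density ratio. Since the change of variables has unit Jacobian and acts nontrivially only on the $i_0$-coordinates, the likelihood ratio equals the product over $k$ of the Laplace density ratios $f_L(\xi_{i_0,k}^{(1)},\theta_{i_0})/f_L(\xi_{i_0,k}^{(2)},\theta_{i_0})$. Applying the reverse triangle inequality to the Laplace density and $||\cdot||_1\le\sqrt{p}\,||\cdot||_2$, each factor is at most $\exp\big(\sqrt{p}\,||\xi_{i_0,k}^{(2)}-\xi_{i_0,k}^{(1)}||_2/\theta_{i_0}\big)$; multiplying over the $K$ noise injections that enter $\mathcal{O}$ and inserting the bound $2C$ gives an overall ratio of at most $\exp\big(2\sqrt{p}\,CK/\theta_{i_0}\big)$. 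Integrating this pointwise bound over any observation set (the change of variables being measure preserving) yields Definition \ref{df2}, and choosing $\theta_{i_0}>2\sqrt{p}\,CK/\epsilon_{i_0}$ forces the exponent below $\epsilon_{i_0}$, which completes the argument. Once the inductive invariance of the second paragraph is in hand, the sensitivity recursion and the Laplace-mechanism estimate are routine.
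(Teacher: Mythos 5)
Your proposal is correct and takes essentially the same route as the paper's proof: the identical noise coupling $\xi_{i_0,k}^{(2)}=\xi_{i_0,k}^{(1)}+(1-\beta_{i_0})\Delta y_{i_0,k}^{\beta}$ forcing the two runs' observations to coincide, the same geometric recursion $\Delta y_{i_0,k+1}^{\beta}=\beta_{i_0}\Delta y_{i_0,k}^{\beta}+\Delta f_{i_0,k}$ bounded via Assumption \ref{asp3}, and the same Laplace density-ratio estimate giving total sensitivity $2\sqrt{p}CK$ and hence $\epsilon_{i_0}$-differential privacy for $\theta_{i_0}>2\sqrt{p}CK/\epsilon_{i_0}$. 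Your explicit justification of the unit Jacobian via the time-triangular structure of the coupling is in fact slightly more careful than the paper's change-of-variables step, but it is a refinement of the same argument rather than a different approach.
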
}

\begin{proof}
 Since the eavesdropper is assumed to know the initial states of the algorithm, we have $\mathbf{x}_0^{(1)}=\mathbf{x}_0^{(2)}$ and $\mathbf{y}_0^{(1)}=\mathbf{y}_0^{(2)}$. 	{\color{black}From Algorithm \ref{alg1}, it  can be seen that given initial state $\{\mathbf{x}_0, \mathbf{y}_0\}$, the network topology $\{\mathbf{R}, \mathbf{C}\}$ and the function set $\mathcal{S}$, the observation sequence $\mathbf{z}=\{\mathcal{O}_k\}_{k=0}^K$ is uniquely determined by the noise sequence $\mathbf{\xi}=\{\xi_k\}_{k=0}^K$.  Hence, we use function $\mathcal{Z}_{\mathcal{F}}: (\mathbb{R}^{n\times p})^{K} \rightarrow (\mathbb{R}^{n\times p})^{K}$ to represent the relation, where $\mathcal{F}=\{\mathbf{x}_0, \mathbf{y}_0,\mathbf{R}, \mathbf{C},\mathcal{S}\}$, i.e., $\mathbf{z}=\mathcal{Z}_{\mathcal{F}}(\bf{\xi})$. From Definition \ref{df2}, keeping $\epsilon-$differential privacy is equivalent to guarantee that for any pair of adjacent function sets $\mathcal{S}^{(1)}$ and $\mathcal{S}^{(2)}$ and any observation $\mathcal{O}\subseteq \text{Range}(\mathcal{F})$,
$$P\{\mathbf{\xi}\in\Omega | \mathcal{Z}_{\mathcal{F}^{(1)}}(\mathbf{\xi}) \in \mathcal{O}\}\leq e^\epsilon P\{\mathbf{\xi}\in\Omega| \mathcal{Z}_{\mathcal{F}^{(2)}}(\mathbf{\xi}) \in \mathcal{O}\},
$$
where $\mathcal{F}^{(l)}=\{\mathbf{x}_0, \mathbf{y}_0,\mathbf{R}, \mathbf{C},\mathcal{S}^{(l)}\}, l=1,2$ and  $\Omega=(\mathbb{R}^{n\times p})^{K}$ denotes the sample space. 

Hence, we need to consider that $\mathcal{Z}_{\mathcal{F}^{(1)}}(\mathbf{\xi}^{(1)})=\mathcal{Z}_{\mathcal{F}^{(2)}}(\mathbf{\xi}^{(2)})$. It is indispensable to guarantee that  $\forall k \le K,  \forall i \in \mathcal{N}$, $y_{i,k+1}^{\alpha,(1)}=y_{i,k+1}^{\alpha,(2)}$ and $x_{i,k}^{(1)}=x_{i,k}^{(2)}$.  From the update rule of Algorithm \ref{alg1}, we can obtain that $y_{i,k+1}^{\alpha,(1)}=y_{i,k+1}^{\alpha,(2)}, \forall k\le K,  \forall i \in \mathcal{N}$ can lead to  $x_{i,k}^{(1)}=x_{i,k}^{(2)},\forall k\le K,  \forall i \in \mathcal{N}$. Thus, we only need to guarantee that $y_{i,k+1}^{\alpha,(1)}=y_{i,k+1}^{\alpha,(2)}, \forall k\le K,  \forall i \in \mathcal{N}$.

For any $i\neq i_0$,  to ensure  $y_{i,1}^{\alpha,(1)}=y_{i,1}^{\alpha,(2)}$, the noise should satify $\xi_{i_0}^{(1)}=\xi_{i_0}^{(2)}$ from \eqref{ya}. It then follows $y_{i,1}^{\beta,(1)}=y_{i,1}^{\beta,(2)}$ from \eqref{yb} since $f_i^{(1)}=f_i^{(2)}$. Based on the  above analysis, we can finally obtain
\begin{equation}\label{xi1}
	\xi_{i,k}^{(1)}=\xi _{i,k}^{(2)}, \quad \forall k\le K, \forall i\neq i_0.
\end{equation}

	Similarly, for agent $i_0$,  at iteration $k=0$, we have 
\begin{equation}\label{xi2}
	\xi_{i_0,0}^{(1)}=\xi_{i_0,0}^{(2)}
\end{equation}	
	
	For $k\ge1$, since $f_{i_0}^{(1)}\ne f_{i_0}^{(2)}$, the noise should satisfy 
	\begin{equation}\label{es}
		\Delta \xi_{i_0,k}=-(1-\beta_{i_0}) \Delta y_{i_0,k}^\beta, \forall 1 \le  k\le K,
	\end{equation}
	where $\Delta \xi_{i_0,k}=\xi_{i_0,k}^{(1)}-\xi_{i_0,k}^{(2)}$ and $\Delta y_{i_0,k}^\beta=y_{i_0,k}^{\beta,(1)}-y_{i_0,k}^{\beta,(2)}$.
	In light of equation \eqref{yb}, we have 
\begin{equation}
	\Delta y_{i_0,k}^\beta=\beta_{i_0}\Delta y_{i_0,k-1}^\beta+\Delta f_{i_0,k-1}, \forall k\ge1,
\end{equation}
where	$\Delta f_{i_0,k-1}=\nabla f_{i_0}^{(1)}(x_{i_0,k}^{(1)})-\nabla f_{i_0}^{(2)}(x_{i_0,k}^{(2)})$. 

Given $\Delta y_{i_0,0}^\beta=0$, by induction we have the following relationship, for any $k\ge 1$,
\begin{equation}\label{yite}
	\Delta y_{i_0,k}^\beta=\beta_{i_0}^{k-1}\Delta f_{i_0,0}+\beta_{i_0}^{k-2}\Delta f_{i_0,1}+\cdots+\beta_{i_0}\Delta f_{i_0,k-1}.
\end{equation}
 Combining \eqref{es} and \eqref{yite}, we have
 \begin{equation}
 	\begin{aligned}
 		\sum\limits_{k=0}^K ||\Delta \xi_{i_0,k}||_1&\le (1-\beta_{i_0}) \sum\limits_{k=0}^K || \Delta y_{i_0,k}^\beta||_1\\
 		&\leq (1-\beta_{i_0})\sum\limits_{k=0}^K \sum\limits_{t=0}^{k-1} \beta_{i_0}^{k-1-t}||\Delta f_{i_0,t}||_1\\
 		&\le (1-\beta_{i_0})\sum\limits_{k=0}^K \sum\limits_{t=0}^{k-1} \beta_{i_0}^{k-1-t}2\sqrt{p}C\\
 		&< (1-\beta_{i_0})\sum\limits_{k=1}^K (\beta_{i_0}^{k-1})2K\sqrt{p}C\\\\
 		&= \frac{2(1-\beta_{i_0})\sqrt{p}CK(1-\beta_{i_0}^K)}{1-\beta_{i_0}}\\
 		&<2\sqrt{p}CK,
 	\end{aligned}
 \end{equation}
 where the second inequality holds based on Assumption \ref{asp3}.

	Denote $\mathcal{R}^{(l)}=\{\mathbf{\xi}^{(l)}|\mathcal{Z}_{\mathcal{F}^{(l)}}(\bf{\xi}^{(l)}) \in \mathcal{O}\}$, $l=1,2,$ then
	\begin{equation}
	\begin{aligned}
				P\{\mathbf{\xi}\in\Omega | &\mathcal{Z}_{\mathcal{F}^{(l)}}(\mathbf{\xi}) \in \mathcal{O}\}\\
				&=P\{\bf{\xi}^{(l)}\in \mathcal{R}^{(l)}\}	=\int_{\mathcal{R}^{(l)}} f_{\xi}(\bf{\xi}^{(l)})d\mathbf{\xi}^{(l)},
	\end{aligned}
		\end{equation}
	where $f_{\xi}(\mathbf{\xi}^{(l)})=\prod_{k=0}^K\prod_{i=1}^n\prod_{r=1}^p f_{L}([\xi_{i,k}^{(l)}]_r,\theta_i)$.
	
	According to the above relation \eqref{xi1}-\eqref{es}, we can obtain for any $\mathbf{\xi}^{(1)}$, there exists a  $\mathbf{\xi}^{(2)}$ such that $\mathcal{Z}_{\mathcal{F}^{(1)}}(\mathbf{\xi}^{(1)})=\mathcal{Z}_{\mathcal{F}^{(2)}}(\mathbf{\xi}^{(2)})$. As the converse argument is also true, the above defines a bijection.  Hence, for any $\mathbf{\xi}^{(2)}$, there exists a unique $(\mathbf{\xi}^{(1)}, \Delta\mathbf{\xi})$	such that $\mathbf{\xi}^{(2)}=\mathbf{\xi}^{(1)}+\Delta\mathbf{\xi}$. Since $\Delta\mathbf{\xi}$ is fixed and is not dependent on $\mathbf{\xi}^{(2)}$, we can use a change of variables to obtain 
	\begin{equation}
		P\{\bf{\xi}^{(2)}\in \mathcal{R}^{(2)}\}	=\int_{\mathcal{R}^{(1)}} f_{\xi}(\mathbf{\xi}^{(1)}+\Delta \mathbf{\xi})d\mathbf{\xi}^{(1)}.
	\end{equation}
	
	Hence, we have
	\begin{equation}
		\begin{aligned}
			&\frac{P\{\mathbf{\xi}\in\Omega |\mathcal{Z}_{\mathcal{F}^{(1)}}(\mathbf{\xi}) \in \mathcal{O}\}}{P\{\mathbf{\xi}\in\Omega |\mathcal{Z}_{\mathcal{F}^{(2)}}(\mathbf{\xi}) \in \mathcal{O}\}}=\frac{P\{\bf{\xi}^{(1)}\in \mathcal{R}^{(1)}\}}{P\{\bf{\xi}^{(2)}\in \mathcal{R}^{(2)}\}}\\
			&=\frac{\int_{\mathcal{R}^{(1)}} f_{\xi}(\mathbf{\xi}^{(1)})d\mathbf{\xi}^{(1)}}{\int_{\mathcal{R}^{(1)}} f_{\xi}(\mathbf{\xi}^{(1)}+\Delta \mathbf{\xi})d\mathbf{\xi}^{(1)}}.\\
		\end{aligned}
	\end{equation}
	 Since 
	 \begin{equation}
	 	\begin{aligned}
	 		&\frac{ f_{\xi}(\mathbf{\xi}^{(1)})d\mathbf{\xi}^{(1)}}{\ f_{\xi}(\mathbf{\xi}^{(1)}+\Delta \mathbf{\xi})d\mathbf{\xi}^{(1)}}\\
	 		&=\prod_{k=0}^K\prod_{i=1}^n\prod_{r=1}^p \frac{f_{L}([\xi_{i,k}^{(1)}]_r,\theta_i)}{f_{L}([\xi_{i,k}^{(1)}+\Delta \xi_{i,k}]_r,\theta_i)}\\
	 		&\leq \prod\limits_{k=0}^K\prod\limits_{r=1}^p\exp\Big(
		\frac{\Big|[\Delta \xi_{i_0,k}]_r\Big|}{\theta_{i_0}}\Big)\\
		&=\exp\Big(\sum\limits_{k=0}^K\frac{||\Delta \xi_{i_0,k}||_1}{\theta_{i_0}}\Big)\leq e^{\epsilon_{i_0}}.
	 	\end{aligned}
	 \end{equation}
Then, integrating both sides over $\mathcal{R}^{(1)}$, we have 
\begin{equation}
	\begin{aligned}
			&P\{\mathbf{\xi}\in\Omega |\mathcal{Z}_{\mathcal{F}^{(1)}}(\mathbf{\xi}) \in \mathcal{O}\}\le e^{\epsilon_{i_0}} P\{\mathbf{\xi}\in\Omega |\mathcal{Z}_{\mathcal{F}^{(2)}}(\mathbf{\xi}),\\
		\end{aligned}
\end{equation}	 
which establishes the $\epsilon_{i_0}$-differential privacy  of agent $i_{0}$. The fact that $i_0$ can be arbitrary agent without loss of generality guarantees $\epsilon_i$-differential privacy of each agent $i$.

}

\end{proof}

\section{SIMULATIONS}
In this section, we illustrate the effectiveness of  SD-Push-Pull.

Consider a network containing $N=5$ agents, shown in Fig. \ref{digraph}. The optimization problem is considered as the ridge regression problem, i.e.,
\begin{equation}
	\min\limits_{x\in\mathbb{R}^p} f(x) =\frac{1}{n}\sum\limits_{i=1}^n f_i(x)=\frac{1}{5}\sum\limits_{i=1}^n\Big((u_i^\top x-v_i)^2+\rho ||x||^2_2\Big)\end{equation}
	where $\rho>0$ is a penalty parameter. Each agent $i$ has its private sample $(u_i,v_i)$ where $u_i\in\mathbb{R}^p$ denotes the features and  $v_i\in\mathbb{R}$ denotes the observed outputs. The vector  $u_i\in[-1,1]^p$ is drawn from the uniform distribution. Then the observed outputs $v_i$ is generated according to $v_i=u_i^\top \tilde x_i +\gamma_i$, where  $\tilde x_i $ is evenly located in $[0,10]^p$ and $\gamma_i\sim \mathcal{N}(0,5)$. In terms of the above parameters, problem \eqref{pro1} has a unique solution $x^{\star\top}=(\sum_{i=1}^n[u_i u_i^\top+n\rho \mathbf{I}])^{-1}\sum_{i=1}^n[u_i u_i^\top]\tilde x_i$.
\begin{figure}[htp]
\centering
 \includegraphics[width=0.20\textwidth]{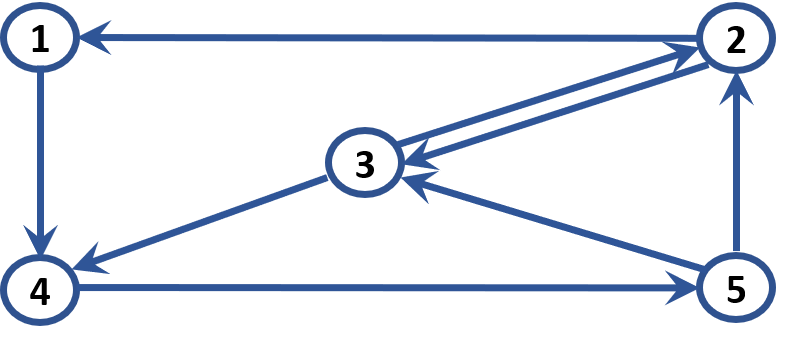}
  \caption{A digraph of 5 agents. }
      \label{digraph}
\end{figure}

The weight between two substates, $\alpha_i$ and $\beta_i$, are set to be $0.01$ and $0.5$ for each agent $i\in\{1,2,3,4,5\}$. The matrix $\mathbf{R}$ and $\mathbf{C}$ are designed as follows: for any agent $i$, $R_{ij}=\frac{1}{|\mathcal{N}_{\mathbf{R},i}^{in}|+1}$ for $j\in \mathcal{N}_{\mathbf{R},i}^{in}$ and $R_{ii}=1-\sum_{j\in \mathcal{N}_{\mathbf{R},i}^{in}} R_{ij}$; for any agent $i$, $C_{li}=\frac{1-\alpha_i}{|\mathcal{N}_{\mathbf{C},i}^{out}|+1}$ for all $l\in \mathcal{N}_{\mathbf{C},i}^{out}$ and $C_{ii}=1-\alpha_i-\sum_{l\in \mathcal{N}_{\mathbf{C},i}^{out}} C_{li}$.

Assume  $\epsilon_i=\epsilon, \forall  i=\{1,2,3,4,5\}$ and $\delta=10$. To investigate the dependence of the algorithm accuracy with differential privacy level, we compare the performance SD-Push-Pull  for three cases:
$\epsilon=1, \epsilon=5$ and $\epsilon=10$, in terms of the normalized residual $\frac{1}{5}\mathbb{E}\Big[\sum\limits_{i=1}^5\frac{||x_{i,k}-x^{\star\top}||_2^2}{||x_{i,0}-x^{\star\top}||_2^2}\Big]$. The results are depicted in Fig. \ref{per}, which reflect that SD-Push-Pull becomes suboptimal to guarantee differential privacy, and the constant $\epsilon$ determines a tradeoff between the privacy level and the optimization accuracy.

\begin{figure}[htp]
\centering
 \includegraphics[width=0.4\textwidth]{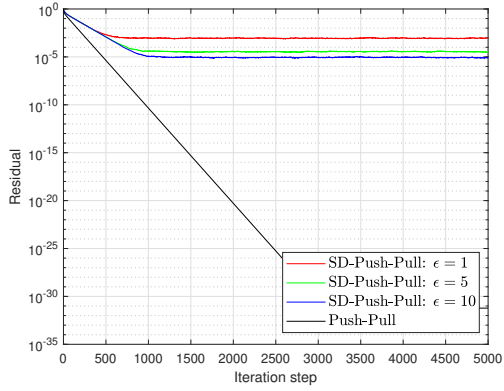}
  \caption{Evolutions of the normalized residual  under different settings of the privacy level. The expected
residual are approximated by averaging over 50 simulation results. Dimension $ p= 10$, stepsize $\alpha = 0.01$ and penalty parameter $\rho = 0.01$.}
      \label{per}
\end{figure}

\section {CONCLUSION AND FUTURE WORK}

In this paper, we  considered a distributed optimization problem with differential privacy in the scenario where a network is abstracted as an unbalanced directed graph. We proposed a state-decomposition-based differentially private distributed optimization algorithm (SD-Push-Pull). In particular, the state decomposition mechanism was adopted to guarantee the differential privacy of individuals’ sensitive information. In addition, we proved that each agent reach a neighborhood of the optimum in expectation exponentially fast under a constant stepsize policy. Moreover, we showed that the constants $(\epsilon,\delta)$ determine a tradeoff between the privacy level and the optimization accuracy. Finally, a numerical example was provided that demonstrates the effectiveness of SD-Push-Pull. Future work includes improving the accuracy of the optimization and considering the optimization problem with  constraints.

\section{APPENDIX}
\subsection{Proof of Lemma \ref{mainl}}\label{ap1}
The three inequalities embedded in \eqref{ine} come from \eqref{m1}, \eqref{m2} and \eqref{m3}, repectively. 

\textit{First inequality}: By Lemma \ref{lemma2}, Lemma \ref{lemma3}, Lemma \ref{lemma6} and Lemma \ref{lm8}, we can obtain from \eqref{m1} that
\begin{equation*}
	\begin{aligned}
		&\mathbb{E}[||\bar x_{k+1}-x^{\star\top}||_2^2|\mathcal{F}_k]\\
		&= \mathbb{E}[||\bar x_{k}-\eta'g_k-x^{\star\top}-\eta'(h_k-g_k)-\frac{\eta}{n}u^\top\mathbf{T}(\mathbf{\tilde y}_k- v\bar y_k)||_2^2|\mathcal{F}_k]\\
		&+\mathbb{E}[||\frac{\eta'}{n}\mathbf{1}_{n}^\top \mathbf{\xi}_k||_2^2|\mathcal{F}_k],\\
		&\leq \tau_1 \mathbb{E}[||\bar x_{k}-\eta'g_k-x^{\star\top}||_2^2|\mathcal{F}_k]+\frac{2\tau_1}{\tau_1-1}\mathbb{E}[||\eta'(h_k-g_k)||_2^2|\mathcal{F}_k]\\
		&+\frac{2\tau_1}{\tau_1-1}\mathbb{E}[||\frac{\eta}{n}u^\top\mathbf{T}(\mathbf{\tilde y}_k- v\bar y_k)||_2^2|\mathcal{F}_k]+\frac{2p\eta'^2}{n}\bar\theta^2.\\
				&\leq \tau_1(1-\eta'\mu)^2 \mathbb{E}[||\bar x_{k}-x^{\star\top}||_2^2|\mathcal{F}_k]+\frac{2\tau_1}{\tau_1-1}\mathbb{E}[||\eta'(h_k-g_k)||_2^2|\mathcal{F}_k]\\
		&+\frac{2\tau_1}{\tau_1-1}\mathbb{E}[||\frac{\eta}{n}u^\top\mathbf{T}(\mathbf{\tilde y}_k- v\bar y_k)||_2^2|\mathcal{F}_k]+\frac{2p\eta'^2}{n}\bar\theta^2.\\
			\end{aligned}
\end{equation*}

Taking $\tau_1=\frac{1}{1-\eta'\mu}$, we have
\begin{equation*}
	\begin{aligned}
	&\mathbb{E}[||\bar x_{k+1}-x^{\star\top}||_2^2|\mathcal{F}_k]\\
		&\leq (1-\eta' \mu)\mathbb{E}[||\bar x_{k}-x^{\star\top}||_2^2|\mathcal{F}_k]+c_1\eta\mathbb{E}[||\mathbf{x}_k-\mathbf{1}_n{\bar x}_k||_2^2|\mathcal{F}_k]\\&+c_2\eta\mathbb{E}[||\mathbf{y}_k- v\bar y_k||_2^2|\mathcal{F}_k]+c_3\eta^2\bar\theta^2,
	\end{aligned}
\end{equation*}
 where 
 \begin{equation}\label{c1}
 	c_1=\frac{2u^\top\mathbf{T}vL^2}{\mu n^2},\quad c_2=\frac{2||u^\top\mathbf{T}||_2^2}{u^\top\mathbf{T}v\mu n}, \quad c_3=\frac{2p(u^\top\mathbf{T}v)^2}{n^3}.
 \end{equation}


\textit{Second inequality:} By relation \eqref{m2}, Lemma \ref{lemma5}, Lemma \ref{lemma6} and Lemma \ref{lm8}, we can obtain
\begin{equation*}
	\begin{aligned}
		&\mathbb{E}[||\mathbf{x}_{k+1}-\mathbf{1}_n\bar x_{k+1}||^2_R|\mathcal{F}_k]\\
		&\leq \tau_2\sigma_R^2 \mathbb{E}[||\mathbf{x}_{k}-\mathbf{1}_n\bar x_{k}||_R^2|\mathcal{F}_k]+\frac{\tau_2\sigma_R^2\eta^2}{\tau_2-1}||\mathbf{T}||_R^2\mathbb{E}[||\mathbf{\tilde y}_k||_R^2|\mathcal{F}_k]\\
		&\leq \tau_2\sigma_R^2 \mathbb{E}[||\mathbf{x}_{k}-\mathbf{1}_n\bar x_{k}||_R^2|\mathcal{F}_k]+\frac{2\tau_2\sigma_R^2\eta^2}{\tau_2-1}||\mathbf{T}v||_R^2\mathbb{E}[||\bar y_k||_R^2|\mathcal{F}_k]\\
		&+\frac{2\tau_2\sigma_R^2\eta^2}{\tau_2-1}||\mathbf{T}||_R^2\mathbb{E}[||\mathbf{\tilde  y}_k-v\bar y_k||_R^2|\mathcal{F}_k].\\
			\end{aligned}
\end{equation*}
In view of Lemma \ref{lemma2},
\begin{equation}\label{bary}
\begin{aligned}
	\mathbb{E}[||\bar y_k||_2^2|\mathcal{F}_k]\leq \frac{2p\bar \theta^2}{n}&+ \frac{2L^2}{n}\mathbb{E}[||\mathbf{x}_k-\mathbf{1}_n \bar x_k||_2^2|\mathcal{F}_k]\\
	&+2L^2\mathbb{E}[||\bar x_k-x^{\star\top}||_2^2|\mathcal{F}_k].
	\end{aligned}
\end{equation}

Taking $\tau_2=\frac{1+\sigma_R^2}{2\sigma_R^2}$, given $1+\sigma_R^2<2$, we  can obtain
\begin{equation*}
	\begin{aligned}
		&\mathbb{E}[||\mathbf{x}_{k+1}-\mathbf{1}_n\bar x_{k+1}||^2_R|\mathcal{F}_k] \leq \frac{1+\sigma_R^2}{2}\mathbb{E}[||\mathbf{x}_{k}-\mathbf{1}_n\bar x_{k}||_R^2|\mathcal{F}_k]\\
		&+\frac{4\sigma_R^2\eta^2\delta_{R,2}||\mathbf{T}v||_R^2}{1-\sigma_R^2}\mathbb{E}[||\bar y_k||_2^2|\mathcal{F}_k]\\
		&+\frac{4\sigma_R^2\eta^2\delta_{R,C}}{1-\sigma_R^2}||\mathbf{T}||_R^2\mathbb{E}[||\mathbf{\tilde  y}_k-v\bar y_k||_C^2|\mathcal{F}_k]\\
		&\le c_4\eta^2\mathbb{E}[||\bar x_k-x^{\star\top}||_2^2|\mathcal{F}_k]+(\frac{1+\sigma_R^2}{2}+c_5\eta^2)\\
		&\times\mathbb{E}[||\mathbf{x}_{k}-\mathbf{1}_n\bar x_{k}||_R^2|\mathcal{F}_k]+c_6\eta^2\mathbb{E}[||\mathbf{\tilde  y}_k-v\bar y_k||_C^2|\mathcal{F}_k]+c_7\eta^2\bar\theta^2,
			\end{aligned}
\end{equation*}
where 
\begin{equation}\label{c2}
	\begin{aligned}
		&c_4=\frac{8\sigma_R^2L^2\delta_{R,2}||\mathbf{T}v||_R^2}{1-\sigma_R^2}, \quad c_5=\frac{8\sigma_R^2L^2\delta_{R,2}||\mathbf{T}v||_R^2}{(1-\sigma_R^2)n},\\
		&c_6=\frac{4\sigma_R^2\delta_{R,C}||\mathbf{T}||_R^2}{1-\sigma_R^2},\qquad c_7=\frac{8p\sigma_R^2\delta_{R,2}||\mathbf{T}v||_R^2}{(1-\sigma_R^2)n}.\\
	\end{aligned}
\end{equation}


\textit{Third inequality:} It follows from \eqref{m3}, Lemma \ref{lemma5}, Lemma \ref{lemma6} and Lemma \ref{lm8} that
\begin{equation*}
	\begin{aligned}
		&\mathbb{E}[||\mathbf{\tilde y}_{k+1}-v\bar y_{k+1}||_C^2|\mathcal{F}_k]\\
		&\leq \frac{1+\sigma_C^2}{2}\mathbb{E}[ ||\mathbf{y}_{k}-v\bar y_{k}||_C^2|\mathcal{F}_k]\\
		&+\frac{(1+\sigma_C^2)\delta_{C,2}^2}{(1-\sigma_C^2)}||\mathbf{I}_{2n}-\frac{v\mathbf{1}_{2n}^\top}{n}||_2^2\mathbb{E}[||\nabla \tilde F(\mathbf{x}_{k+1})-\nabla \tilde F(\mathbf{x}_{k})||_2^2|\mathcal{F}_k].\\
	\end{aligned}
\end{equation*}

Next, we bound $\mathbb{E}[||\nabla \tilde F(\mathbf{x}_{k+1})-\nabla \tilde F(\mathbf{x}_{k})||_2^2|\mathcal{F}_k]$.

\begin{equation*}
	\begin{aligned}
	&\mathbb{E}[||\nabla \tilde F(\mathbf{x}_{k+1})-\nabla \tilde F(\mathbf{x}_{k})||_2^2|\mathcal{F}_k]\\
	&=\mathbb{E}[||[(\mathbf{\xi}_{k+1}-\mathbf{\xi}_k)^\top, (\nabla F(\mathbf{x}_{k+1})-\nabla F(\mathbf{x}_k))^\top]^\top||_2^2|\mathcal{F}_k]\\
		&=\mathbb{E}[||\mathbf{\xi}_{k+1}-\mathbf{\xi}_k||_2^2|\mathcal{F}_k]+\mathbb{E}[||\nabla F(\mathbf{x}_{k+1})-\nabla F(\mathbf{x}_k)||_2^2|\mathcal{F}_k]\\
		&\leq4pn\bar\theta^2+\mathbb{E}[||\nabla F(\mathbf{x}_{k+1})-\nabla F(\mathbf{x}_k)||_2^2|\mathcal{F}_k]\\
			\end{aligned}
\end{equation*}
Then, From Assumption \ref{asp} and equation \eqref{bary}, we have
\begin{equation*}
	\begin{aligned}
		&\mathbb{E}[||\nabla F(\mathbf{x}_{k+1})-\nabla F(\mathbf{x}_k)||_2^2|\mathcal{F}_k]\\
		&\leq L^2\mathbb{E}[||\mathbf{x}_{k+1}-\mathbf{x}_{k}||_2^2|\mathcal{F}_k]\\
		&\leq L^2\mathbb{E}[||(\mathbf{R}-\mathbf{I})(\mathbf{x}_{k}-\mathbf{1}_n\bar x_k)\\
		&\qquad \qquad \qquad \qquad \qquad-\eta\mathbf{RT}(\mathbf{\tilde y}_k-v\bar y_k)-\eta\mathbf{RT}v\bar y_k||_2^2|\mathcal{F}_k]\\
				&\le 3L^2||\mathbf{R}-\mathbf{I}||_2^2\mathbb{E}[||\mathbf{x}_{k}-\mathbf{1}_n\bar x_k||_2^2|\mathcal{F}_k]+3L^2\eta^2||\mathbf{R}\mathbf{T}||_2^2\\
				&\times \mathbb{E}[||\mathbf{\tilde y}_k-v\bar y_k||_2^2|\mathcal{F}_k]+3L^2\eta^2||\mathbf{R}\mathbf{T}||_2^2||v||_2^2\mathbb{E}[||\bar y_k||_2^2|\mathcal{F}_k]
	\end{aligned}
\end{equation*}

In light of inequality \eqref{bary}, we further have

\begin{equation*}
	\begin{aligned}
		&\mathbb{E}[||\nabla F(\mathbf{x}_{k+1})-\nabla F(\mathbf{x}_k)||_2^2|\mathcal{F}_k]\\
				&\le (3L^2||\mathbf{R}-\mathbf{I}||_2^2+\frac{6L^4\eta^2||\mathbf{R}\mathbf{T}||_2^2||v||_2^2}{n})\mathbb{E}[||\mathbf{x}_{k}-\mathbf{1}_n\bar x_k||_2^2|\mathcal{F}_k]\\
				&+3L^2\eta^2||\mathbf{R}\mathbf{T}||_2^2\mathbb{E}[||\mathbf{\tilde y}_k-v\bar y_k||_2^2|\mathcal{F}_k]+\frac{6L^2\eta^2||\mathbf{R}\mathbf{T}||_2^2||v||_2^2p\bar\theta^2}{n}\\
				&+6L^4\eta^2||\mathbf{R}\mathbf{T}||_2^2||v||_2^2\mathbb{E}[||\bar x_k-x^{\star\top}||_2^2|\mathcal{F}_k]
	\end{aligned}
\end{equation*}

Thus, given that $\eta< \frac{1}{L}$, we can obtain

\begin{equation*}
	\begin{aligned}
		&\mathbb{E}[||\mathbf{\tilde y}_{k+1}-v\bar y_{k+1}||_C^2|\mathcal{F}_k]\\
		&\le c_8\eta^2\mathbb{E}[||\bar x_k-x^{\star\top}||_2^2|\mathcal{F}_k]+c_9\mathbb{E}[||\mathbf{x}_{k}-\mathbf{1}_n\bar x_k||_2^2|\mathcal{F}_k]\\
		&+(\frac{1+\sigma_C^2}{2}+c_{10}\eta^2)\mathbb{E}[||\mathbf{\tilde y}_k-v\bar y_k||_2^2|\mathcal{F}_k]+c_{11}\bar\theta^2,
	\end{aligned}
\end{equation*}
where
\begin{equation}\label{c3}
	\begin{aligned}
		&c_8=\frac{12\delta_{C,2}^2L^4}{(1-\sigma_C^2)}||\mathbf{I}_{2n}-\frac{v\mathbf{1}_{2n}^\top}{n}||_2^2||\mathbf{R}\mathbf{T}||_2^2||v||_2^2,\\
		&c_9=\frac{2\delta_{C,2}^2}{(1-\sigma_C^2)}||\mathbf{I}_{2n}-\frac{v\mathbf{1}_{2n}^\top}{n}||_2^2\Big(3L^2||\mathbf{R}-\mathbf{I}||_2^2\\
		&+\frac{6L^2||\mathbf{R}\mathbf{T}||_2^2||v||_2^2}{n}\Big),\\
		&c_{10}=\frac{6\delta_{C,2}^2L^2}{(1-\sigma_C^2)}||\mathbf{I}_{2n}-\frac{v\mathbf{1}_{2n}^\top}{n}||_2^2||\mathbf{R}\mathbf{T}||_2^2,\\
		&c_{11}=\frac{12\delta_{C,2}^2p}{(1-\sigma_C^2)n}||\mathbf{I}_{2n}-\frac{v\mathbf{1}_{2n}^\top}{n}||_2^2||\mathbf{R}\mathbf{T}||_2^2||v||_2^2.\\
	\end{aligned}
\end{equation}


\bibliographystyle{unsrt}
\bibliography{ref.bib}

\begin{thebibliography}{10}

\bibitem{braun2016distributed}
Philipp Braun, Lars Gr{\"u}ne, Christopher~M Kellett, Steven~R Weller, and Karl
  Worthmann.
\newblock A distributed optimization algorithm for the predictive control of
  smart grids.
\newblock {\em IEEE Transactions on Automatic Control}, 61(12):3898--3911,
  2016.

\bibitem{dougherty2016extremum}
Sean Dougherty and Martin Guay.
\newblock An extremum-seeking controller for distributed optimization over
  sensor networks.
\newblock {\em IEEE Transactions on Automatic Control}, 62(2):928--933, 2016.

\bibitem{mohebifard2018distributed}
Rasool Mohebifard and Ali Hajbabaie.
\newblock Distributed optimization and coordination algorithms for dynamic
  traffic metering in urban street networks.
\newblock {\em IEEE Transactions on Intelligent Transportation Systems},
  20(5):1930--1941, 2018.

\bibitem{nedic2009distributed}
Angelia Nedic and Asuman Ozdaglar.
\newblock Distributed subgradient methods for multi-agent optimization.
\newblock {\em IEEE Transactions on Automatic Control}, 54(1):48--61, 2009.

\bibitem{xu2015augmented}
Jinming Xu, Shanying Zhu, Yeng~Chai Soh, and Lihua Xie.
\newblock Augmented distributed gradient methods for multi-agent optimization
  under uncoordinated constant stepsizes.
\newblock In {\em 2015 54th IEEE Conference on Decision and Control (CDC)},
  pages 2055--2060. IEEE, 2015.

\bibitem{kia2019tutorial}
Solmaz~S Kia, Bryan Van~Scoy, Jorge Cortes, Randy~A Freeman, Kevin~M Lynch, and
  Sonia Martinez.
\newblock Tutorial on dynamic average consensus: The problem, its applications,
  and the algorithms.
\newblock {\em IEEE Control Systems Magazine}, 39(3):40--72, 2019.

\bibitem{pu2020push}
Shi Pu, Wei Shi, Jinming Xu, and Angelia Nedic.
\newblock Push-pull gradient methods for distributed optimization in networks.
\newblock {\em IEEE Transactions on Automatic Control}, 2020.

\bibitem{xi2018linear}
Chenguang Xi, Van~Sy Mai, Ran Xin, Eyad~H Abed, and Usman~A Khan.
\newblock Linear convergence in optimization over directed graphs with
  row-stochastic matrices.
\newblock {\em IEEE Transactions on Automatic Control}, 63(10):3558--3565,
  2018.

\bibitem{mandal2016privacy}
Avikarsha Mandal.
\newblock Privacy preserving consensus-based economic dispatch in smart grid
  systems.
\newblock In {\em International Conference on Future Network Systems and
  Security}, pages 98--110. Springer, 2016.

\bibitem{wang2019privacy}
Yongqiang Wang.
\newblock Privacy-preserving average consensus via state decomposition.
\newblock {\em IEEE Transactions on Automatic Control}, 64(11):4711--4716,
  2019.

\bibitem{zhang2018enabling}
Chunlei Zhang and Yongqiang Wang.
\newblock Enabling privacy-preservation in decentralized optimization.
\newblock {\em IEEE Transactions on Control of Network Systems}, 6(2):679--689,
  2018.

\bibitem{lu2018privacy}
Yang Lu and Minghui Zhu.
\newblock Privacy preserving distributed optimization using homomorphic
  encryption.
\newblock {\em Automatica}, 96:314--325, 2018.

\bibitem{nozari2017differentially}
Erfan Nozari, Pavankumar Tallapragada, and Jorge Cort{\'e}s.
\newblock Differentially private average consensus: Obstructions, trade-offs,
  and optimal algorithm design.
\newblock {\em Automatica}, 81:221--231, 2017.

\bibitem{huang2015differentially}
Zhenqi Huang, Sayan Mitra, and Nitin Vaidya.
\newblock Differentially private distributed optimization.
\newblock In {\em Proceedings of the 2015 International Conference on
  Distributed Computing and Networking}, pages 1--10, 2015.

\bibitem{ding2018consensus}
Tie Ding, Shanying Zhu, Jianping He, Cailian Chen, and Xinping Guan.
\newblock Consensus-based distributed optimization in multi-agent systems:
  Convergence and differential privacy.
\newblock In {\em Proceedings of the IEEE Conference on Decision and Control
  (CDC)}, pages 3409--3414, 2018.

\bibitem{ding2021differentially}
Tie Ding, Shanying Zhu, Jianping He, Cailian Chen, and Xinping Guan.
\newblock Differentially private distributed optimization via state and
  direction perturbation in multi-agent systems.
\newblock {\em IEEE Transactions on Automatic Control}, 2021.

\bibitem{ghabcheloo2005coordinated}
Reza Ghabcheloo, Ant{\'o}nio Pascoal, Carlos Silvestre, and Isaac Kaminer.
\newblock Coordinated path following control of multiple wheeled robots with
  directed communication links.
\newblock In {\em Proceedings of the IEEE Conference on Decision and Control},
  pages 7084--7089, 2005.

\bibitem{yang2013consensus}
Shiping Yang, Sicong Tan, and Jian-Xin Xu.
\newblock Consensus based approach for economic dispatch problem in a smart
  grid.
\newblock {\em IEEE Transactions on Power Systems}, 28(4):4416--4426, 2013.

\bibitem{mao2020privacy}
Shuai Mao, Yang Tang, Zi~wei Dong, Ke~Meng, Zhao~Yang Dong, and Feng Qian.
\newblock A privacy preserving distributed optimization algorithm for economic
  dispatch over time-varying directed networks.
\newblock {\em IEEE Transactions on Industrial Informatics}, 2020.

\bibitem{ye2021differentially}
Maojiao Ye, Guoqiang Hu, Lihua Xie, and Shengyuan Xu.
\newblock Differentially private distributed nash equilibrium seeking for
  aggregative games.
\newblock {\em IEEE Transactions on Automatic Control}, 67(5):2451--2458, 2021.

\bibitem{dwork2006calibrating}
Cynthia Dwork, Frank McSherry, Kobbi Nissim, and Adam Smith.
\newblock Calibrating noise to sensitivity in private data analysis.
\newblock In {\em Theory of cryptography conference}, pages 265--284. Springer,
  2006.

\bibitem{pu2018push}
Shi Pu, Wei Shi, Jinming Xu, and Angelia Nedi{\'c}.
\newblock A push-pull gradient method for distributed optimization in networks.
\newblock In {\em Proceedings of the IEEE Conference on Decision and Control
  (CDC)}, pages 3385--3390, 2018.

\bibitem{nedic2017achieving}
Angelia Nedic, Alex Olshevsky, and Wei Shi.
\newblock Achieving geometric convergence for distributed optimization over
  time-varying graphs.
\newblock {\em SIAM Journal on Optimization}, 27(4):2597--2633, 2017.

\bibitem{xin2018linear}
Ran Xin and Usman~A Khan.
\newblock A linear algorithm for optimization over directed graphs with
  geometric convergence.
\newblock {\em IEEE Control Systems Letters}, 2(3):315--320, 2018.

\bibitem{pu2020robust}
Shi Pu.
\newblock A robust gradient tracking method for distributed optimization over
  directed networks.
\newblock {\em arXiv preprint arXiv:2003.13980}, 2020.

\bibitem{horn2012matrix}
Roger~A Horn and Charles~R Johnson.
\newblock {\em Matrix analysis}.
\newblock Cambridge university press, 2012.

\bibitem{qu2017harnessing}
Guannan Qu and Na~Li.
\newblock Harnessing smoothness to accelerate distributed optimization.
\newblock {\em IEEE Transactions on Control of Network Systems},
  5(3):1245--1260, 2017.

\bibitem{pu2020distributed}
Shi Pu and Angelia Nedi{\'c}.
\newblock Distributed stochastic gradient tracking methods.
\newblock {\em Mathematical Programming}, pages 1--49, 2020.

\end{thebibliography}
\begin{IEEEbiography}[{\includegraphics[width=1in,height=1.25in,clip,keepaspectratio]{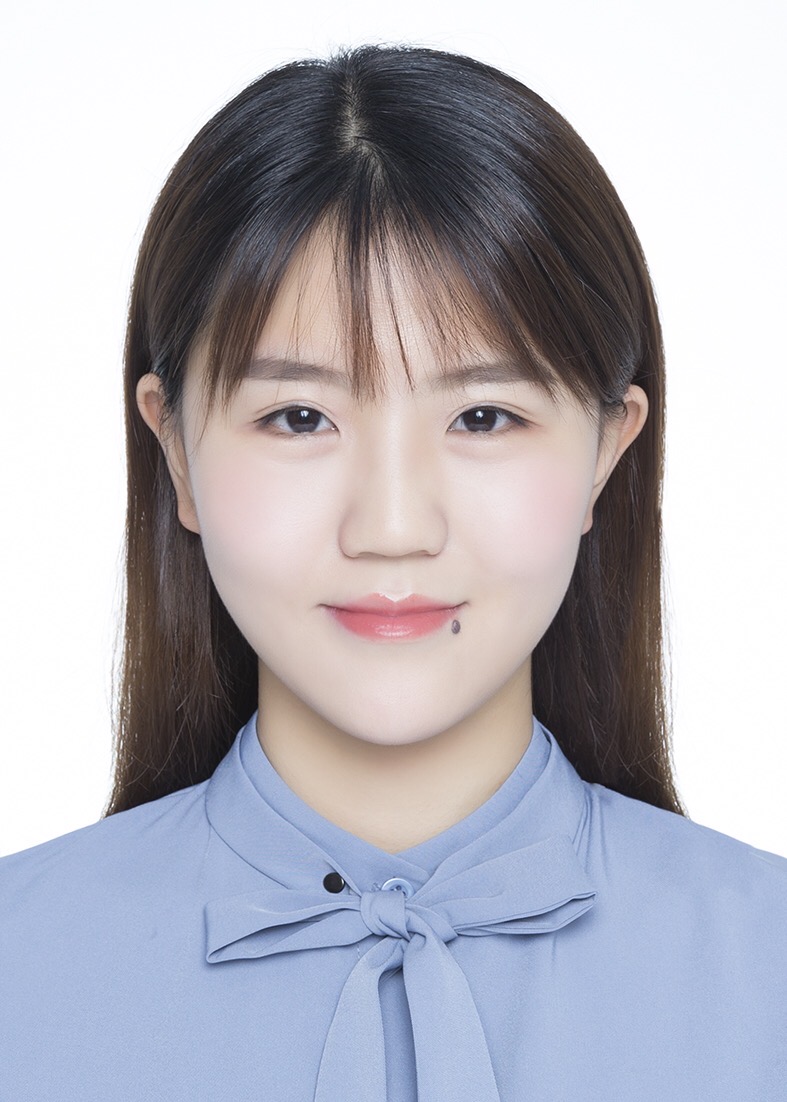}}]{Xiaomeng Chen} received her B.S. degree in Electrical Science and Engineering from Nanjing University, JiangSu, China, in 2019. She is currently pursuing   the Ph.D degree in Electrical and Computer Engineering from Hong Kong University of Science and Technology, Hong Kong. Her current research interests include  cyber-physical system security/privacy, compressed communication, event-triggered mechanism and distributed optimization.
\end{IEEEbiography}
\begin{IEEEbiography}[{\includegraphics[width=1in,height=1.25in,clip,keepaspectratio]{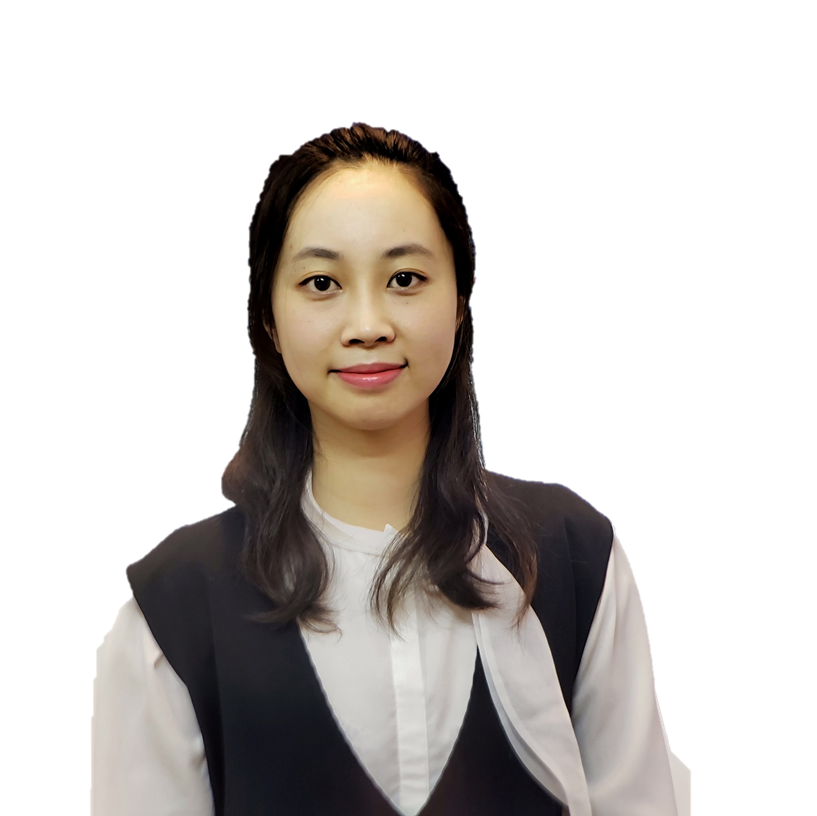}}]{Lingying Huang} received her B.S. degree in Electrical Engineering and Automation from Southeast University, JiangSu, China, in 2017, and the Ph.D degree in Electrical and Computer Engineering from Hong Kong University of Science and Technology, Hong Kong, in 2021. She is currently a Research fellow at the School of Electrical and Electronic Engineering,  Nanyang Technological University. From July 2015 to August 2015, she had a summer program in Georgia Tech Univerisity, USA.  Her current research interests include intelligent vehicles, cyber-physical system security/privacy, networked state estimation, event-triggered mechanism and distributed optimization.
\end{IEEEbiography}
\begin{IEEEbiography}[{\includegraphics[width=1in,height=1.25in,clip,keepaspectratio]{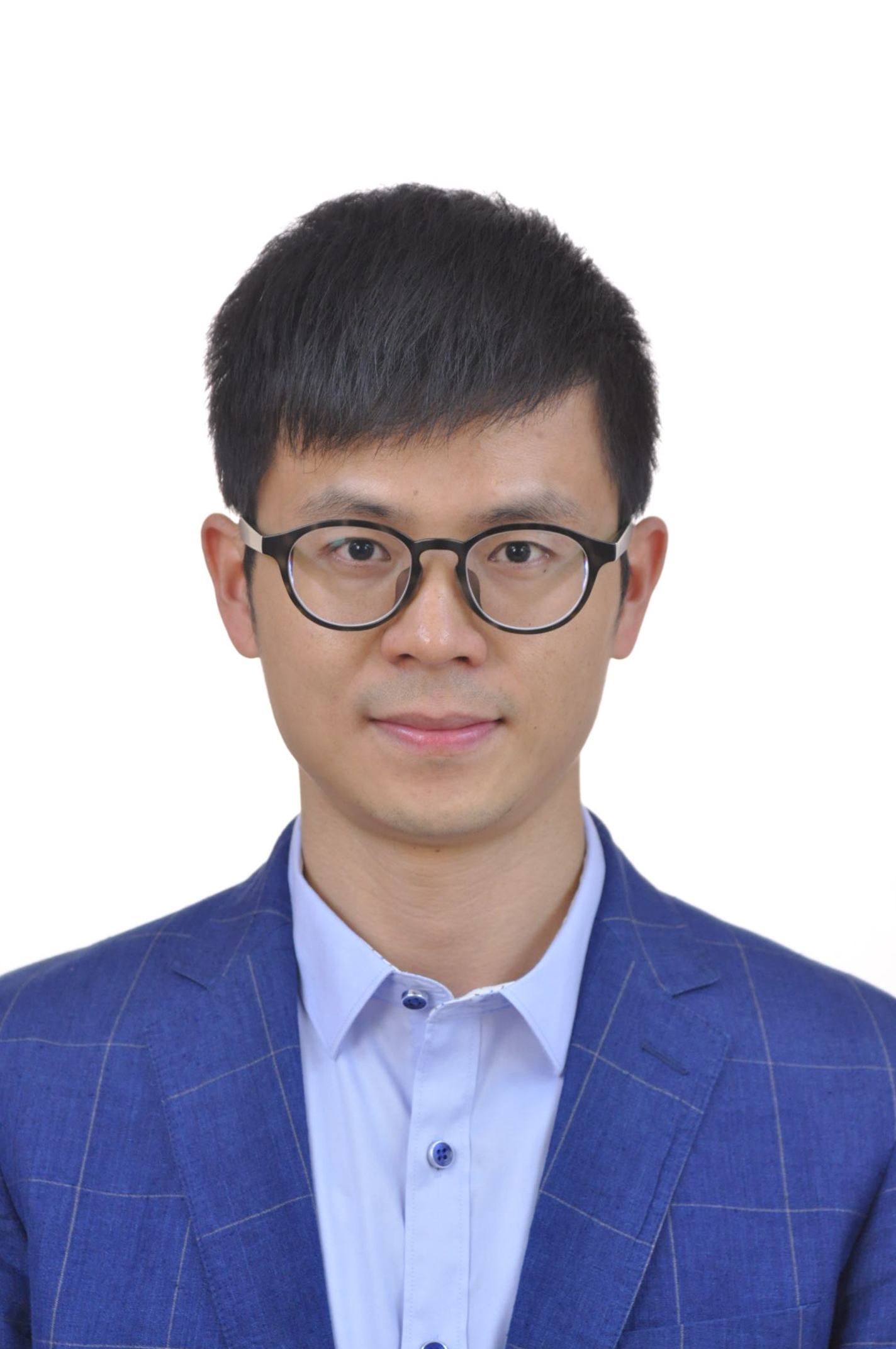}}]{Lidong He} received the B. Eng. Degree in mechanical engineering from Zhejiang Ocean University, Zhoushan, China, in 2005. He received the Master’s degree from Northeastern University, Shenyang, China, in 2008 and the Ph.D. degree in control science and engineering from Shanghai Jiao Tong University, Shanghai, China, in 2014. In the fall of 2010 and 2011, he was a visiting student with The Hong Kong University of Science and Technology.
 
From 2014 to 2016, He was a postdoctoral researcher with Zhejiang University, Hangzhou, China. In 2016, He joined the School of Automation, Nanjing University of Science and Technology and now is an Associate professor.
 
His research interests include distributed control of multi-agent systems, secure estimation and control for cyber physical systems. He is an active reviewer for many international journals.
\end{IEEEbiography}

\begin{IEEEbiography}
[{\includegraphics[width=1in,height=1.25in,clip,keepaspectratio]{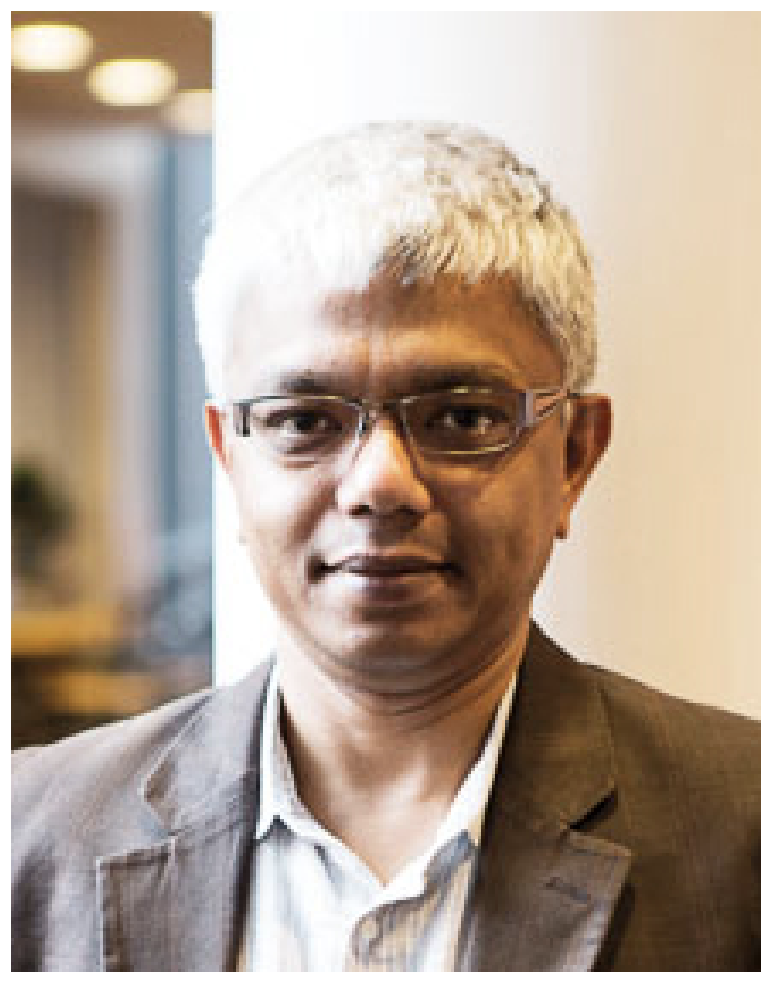}}]
	{Subhrakanti Dey} received the Bachelor in Technology and Master in Technology degrees from the Department of Electronics and Electrical Communication Engineering, Indian Institute of Technology, Kharagpur, in 1991 and 1993, respectively, and the Ph.D. degree from the Department of Systems Engineering, Research School of Information Sciences and Engineering, Australian National University, Canberra, in 1996.

He is currently a Professor with the Hamilton Institute, National University of Ireland, Maynooth, Ireland. Prior to this, he was a Professor with the Dept. of Engineering Sciences in Uppsala University, Sweden (2013-2017), Professor with the Department of Electrical and Electronic Engineering, University of Melbourne, Parkville, Australia, from 2000 until early 2013, and a Professor of Telecommunications at University of South Australia during 2017-2018.  From September 1995 to September 1997, and September 1998 to February 2000, he was a Postdoctoral Research Fellow with the Department of Systems Engineering, Australian National University. From September 1997 to September 1998, he was a Postdoctoral Research Associate with the Institute for Systems Research, University of Maryland, College Park.

His current research interests include wireless communications and networks, signal processing for sensor networks, networked control systems, and molecular communication systems.

Professor Dey currently serves as a Senior Editor on the Editorial Board IEEE Transactions on Control of Network Systems, and as an Associate Editor/Editor for Automatica, IEEE Control Systems Letters, and IEEE Transactions on Wireless Communications. He was also an Associate Editor for IEEE and Transactions on Signal Processing, (2007-2010, 2014-2018), IEEE Transactions on Automatic Control (2004-2007),  and Elsevier Systems and Control Letters (2003-2013).
\end{IEEEbiography}

\begin{IEEEbiography}
[{\includegraphics[width=1in,height=1.25in,clip,keepaspectratio]{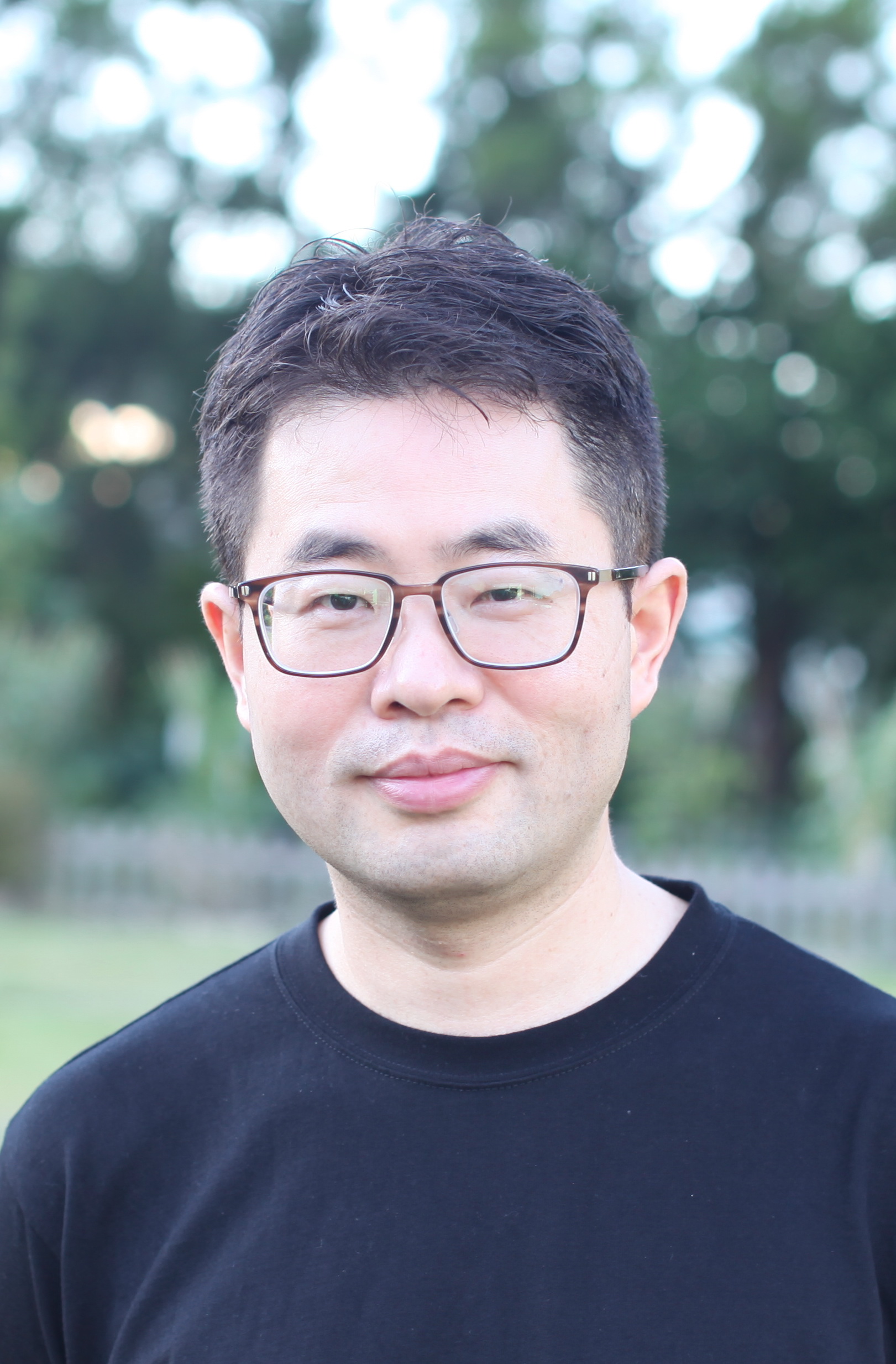}}]
{Ling Shi} received the B.E. degree in electrical and electronic engineering from Hong Kong University of Science and Technology, Kowloon, Hong Kong, in 2002 and the Ph.D. degree in Control and Dynamical Systems from California Institute of Technology, Pasadena, CA, USA, in 2008. He is currently a Professor in the Department of Electronic and Computer Engineering, and the associate director of the Robotics Institute, both at the Hong Kong University of Science and Technology. His research interests include cyber-physical systems security, networked control systems, sensor scheduling, event-based state estimation, and exoskeleton robots. He is a senior member of IEEE. He served as an editorial board member for the European Control Conference 2013-2016. He was a subject editor for International Journal of Robust and Nonlinear Control (2015-2017),  an associate editor for IEEE Transactions on Control of Network Systems (2016-2020),  an associate editor for IEEE Control Systems Letters (2017-2020), and an associate editor for a special issue on Secure Control of Cyber Physical Systems in the IEEE Transactions on Control of Network Systems (2015-2017). He also served as the General Chair of the 23rd International Symposium on Mathematical Theory of Networks and Systems (MTNS 2018). He is a member of the Young Scientists Class 2020 of the World Economic Forum (WEF).
\end{IEEEbiography}

\end{document}